\title{A copula based Markov Reward approach to the credit spread in European Union. }
\author{Guglielmo D'Amico (1), Filippo Petroni (2), Philippe Regnault (3),\\ Stefania Scocchera(1)\footnote{corresponding author: Stefania Scocchera, e-mail: stefania.scocchera@unich.it},  Loriano Storchi(1) }
\date{ \footnotesize 1 Department of Pharmacy, University G. D'Annunzio Chieti - Pescara, Italy\\
2. Department of Management, University Politecnica delle Marche, Ancona, Italy\\
3. Laboratory of Mathematics, University of Reims Champagne-Ardenne, Reims, France.}
\newtheorem{Def}{Definition}
\newtheorem{remark}[Def]{Remark}
\newtheorem{Assumption}{Assumption}
\newtheorem{proposition}[Def]{Proposition}
\newcommand{\N}{\mathbb{N}}
\renewcommand{\P}{\mathbb{P}}
\newcommand{\E}{\mathbb{E}}
\newcommand{\bfP}{\mathbf{P}} %Transition matrices on each homogeneous part
\newcommand{\D}{DT(\textbf{sh}(t))} %dynamic theil index
\begin{document}
\maketitle
\begin{abstract}
In this paper, we propose a  methodology based on piecewise homogeneous Markov chain for credit ratings and a multivariate model of the credit spreads to evaluate the financial risk in European Union (EU). Two main aspects are considered: how the financial risk is distributed among the European countries and how large is the  value of the total risk. The first aspect is evaluated by means of the expected value of a dynamic entropy measure. The second one is solved by computing the evolution of the total credit spread over time. 
Moreover, the covariance between countries' total spread allows understand any contagions in the EU. 
The methodology is applied to real data of $24$ European countries for the three major rating agencies: Moody's, Standard \& Poor's and Fitch. Obtained results suggest that both the financial risk inequality and the value of the total risk increase over time at a different rate depending on the rating agency and that the dependence structure is characterized by a strong correlation between most of European countries. 

\noindent keywords: Sovereign credit rating, Markov process, Dynamic measure of inequality, Copula,  Change-point
% \PACS{PACS code1 \and PACS code2 \and more}
% \subclass{MSC code1 \and MSC code2 \and more}
\end{abstract}
\section{Introduction}
\label{intro}
The interest on the sovereign securities has increased  after the occurrence of some sovereign defaults and financial crisis. In particular, the Eurozone has become of main interest, considering the economic and financial implications given by the integration of countries. 
Following our previous works (\cite{continuous}-\cite{discrete}) we aim at gaining insight on one of the financial implications  arising from the economic union of European countries. Specifically, we focus on the financial risk related to each country and to the European Union as a whole.
We refer to financial  risk as the countries' ability to face with their financial obligations. It is expressed by the amount of credit spread, that depends on the sovereign credit rating assignment. In particular, there are two main questions we want to reply. The first one concerns the distribution of this risk among European countries. The second one refers to the size of the risk faced by the whole set of countries. Therefore, the purpose of the present work is to understand the behaviour of the financial risk focusing both on the evolution of its total size and the assessment of the inequality of the risk distribution among countries. 
The European Union has been analysed focusing on different problems. As a matter of example the public debt and its ownership is investigated in \cite{eu}; in \cite{cds} the authors studied dependence of default risk of several Eurozone countries. While the problem about the income inequality has been faced in \cite{d2012}. Changes in European structure given by the exit of some members is analysed in \cite{grexit} and  \cite{brexit}.
The influence of rating dynamics on the credit spread evolution has been highlighted in financial literature by \cite{semi:spread} and \cite{huang}, mainly concerning industry sector. Another source of dependence for the credit spread evolution can be found in \cite{bivariate}. In this work the authors proposed a bivariate semi-Markov reward approach to include the counterpart credit risk. Others works proposed the application of a Copula to capture dependencies see \cite{copula4},  \cite{copula2}, \cite{copula1} and \cite{copula3}. However, these applications do not concern credit spread modelling. 
Regarding the credit rating studies,  rating modelling includes the Markov processes (see \cite{bangia2002}-\cite{belkin98}-\cite{JS}-\cite{nickell00}) and the semi-Markov processes (see, among others, \cite{d2017semi}, \cite{d2006homogeneous} amd \cite{semi:application}). Furthermore, sovereign credit ratings have been modelled by means of Markov processes (see, e.g.\cite{hu2002}, \cite{fuertes_sovereign}, \cite{sovereign2015a} ).
Information theory is also applied to economic and financial issues. In particular, dynamic inequality measures can be found in \cite{dr}  where the authors proposed a dynamic extension of common poverty indices. In \cite{d2012} the authors proposed a dynamic measurement of income inequality based on the Theil index, and successively a decomposition of this measure has been advanced in \cite{d2014}. The measure of inequality we propose in order to evaluate the financial risk is based on the last two contributions. 
Interestingly, the topic we are working on has never been faced in financial literature. We give a contribution on the existent literature by proposing a copula based Markov reward approach to model credit spread dynamics and to evaluate the financial risk. In particular, the questions we  posed are solved by the assessment of the dynamic Theil index for the financial inequality measurement and by the computation of the total credit spread for the quantification of the total risk in European Union. Furthermore, an analysis of the dependence structure is carried out by means of the covariance between countries' total credit spread.
The model has been implemented for three rating agencies: Moody's, Standard \& Poor's and Fitch, to find out if there are  differences stemming from the dissimilarities of the rating assignment process. \\
Obtained results show that: the financial inequality is increasing in the future for all three agencies, although the values evolve differently over time.  The total financial risk also increases with dissimilarities depending on the rating agency; the dependence structure is characterized by a strong correlation between countries. 
\noindent The paper is organized as follows: the second Section analyse the data while the third one  describes the model. Section \ref{theil:index}  presents the indicators computed to evaluate financial risk, i.e. the dynamic inequality measure, the total credit spread and the covariance between countries' total credit spread. In Section \ref{results} empirical results are discussed, followed by the concluding remarks. 

\section{Data analysis}
\label{data:analysis}
Our research objective is to provide a model able to measure inequality in the financial risk in a set of countries (or financial entities) and to assess the evolution in time of the total risk. To this end, we focused our attention on European countries and we collected data on two main financial variables: the sovereign credit ratings and the credit spreads.  
Sovereign credit rating is an ordinal measure of the country's credit risk. It expresses the ability of a country to face its financial commitments. Credit spreads are also collected, they are the difference between interest rates of various countries. It is well known that credit spread depends on sovereign credit rating assignment, see e.g. \cite{discrete}.

The sovereign credit rating assigned to the European countries by the three major rating agencies, Moody's, Standard \& Poor's and Fitch, has been collected.  Thus, we built three different datasets, one for each rating agency, collecting rating histories from November $23$, $1998$ to June $26$, $2018$ on a daily scale. The data are gathered from the \emph{Tradingeconomics} website and grouped into eight rating classes as shown in Table \ref{ratass}.
\begin{table}[!ht]
\centering
\begin{tabular}{|ccccccccc|}
\hline
Moody's&Aaa& Aa&A&Baa&Ba&B&Caa-Ca&C\\
\hline
S\&P&AAA&AA&A&BBB&BB&B&CCC-CC-C&SD-D\\
\hline
Fitch&AAA&AA&A&BBB&BB&B&CCC-CC-C&RD-D\\
\hline
rank&1&2&3&4&5&6&7&8\\
\hline
\end{tabular}
\caption{Rating class classification for each rating agency}\label{ratass}
\end{table}  
The rating class  $1$ is the best rating assignment meaning that the  issuer has an exceptionally strong capacity to cope with its financial commitments. Lower  credit rating assignments, i.e. $2, ... , 7$, imply the belief that the issuer is gradually less able to face with its financial commitments. Rank 8 denotes financial default.
As not all data were available for all European countries the sample is composed by $24$ members: Belgium, Bulgaria, Czech Republic, Germany, Denmark, Ireland, Greece, Spain, France, Croatia, Italy, Lithuania, Hungary, Malta, Netherlands, Austria, Poland, Portugal, Romania, Slovenia, Slovakia, Finland, Sweden, United Kingdom.\\
\begin{table}[!ht]\label{down:up} 
\centering
\begin{tabular}{|llccc|}
\hline
&&1998/2007&2008/2014&2015/2018\\
\hline
S\&P& upgrades&55.18&10.34&34.48\\
&     downgrades&6.25&78.12&15.63\\
\hline
Moody's&upgrades&66.67&4.16&19.17\\
&       downgrades&0&88.89&11.11\\
\hline
Fitch& upgrades&64.29&10.71&25\\
&     downgrades&12.5&78.13&9.37\\
\hline
\end{tabular}
\caption{upgrades/downgrade distribution over time for all rating agencies ( \% )}
\end{table}
Rating assignments are almost stables, in fact there are few transitions. We observed 61, 51, 60 transitions for S\&P, Moody's and Fitch, respectively. In particular, the upgrades / downgrades (transition to a better / lower rating class) are concentrated in different periods. Table \ref{down:up} illustrates the percentage of upgrades/downgrades experienced by all countries according to the three rating agencies assignment in three sub-periods: 1998-2007, 2008-2014, 2015-2018. According to Table \ref{down:up}, the downgrades are mostly concentrated in the period ranging between $2008$ and $2014$ covering the financial crisis  and the Greek crisis: 78,12\% for S\&P, 88.89\% for Moody's and 78.13\% for Fitch. On the other hand, the $55.18\%$, $66.67\%$ and $64.29\%$ of upgraded is observed before the 2008 for S\&P, Moody's and Fitch, respectively. While the rest of upgrades are mostly concentrated in the period spreading between the 2015 and 2018. Figure \ref{up:down} shows the timing of the upgrades and the downgrades for Moody's. As it is possible to see, before the 2008 there were no downgrades and the percentage of upgrades during the 2008-2014 is smaller than the other agencies. 
\begin{figure}[!h]
\centering
\includegraphics[width=0.8\textwidth]{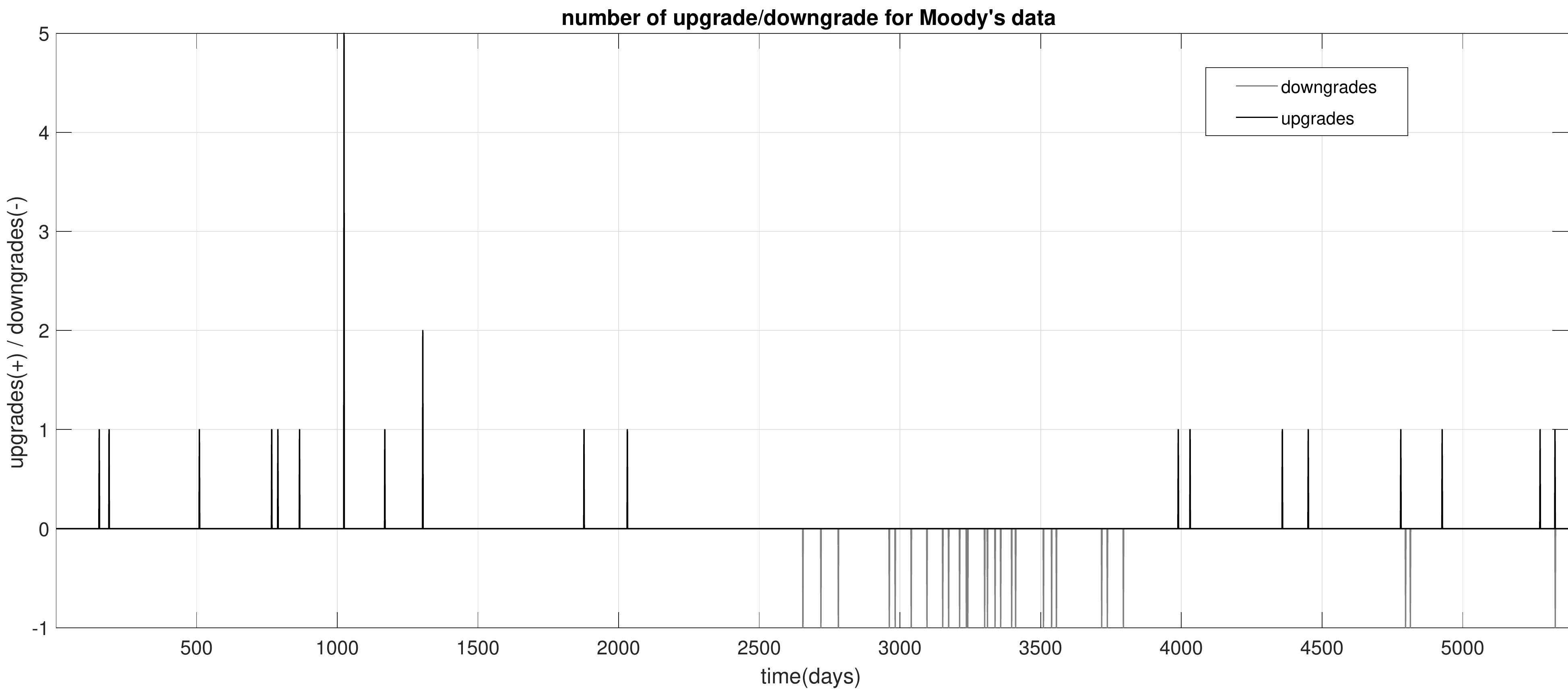}
\caption{number of upgrades/downgrades observed over Moody's data.}\label{up:down}
\end{figure}

The second variable we need is the credit spread.
Generally, the credit spread is given by the difference between the interest rate and a benchmark.
The benchmark  we use to compute the credit spread is the minimum value among the interest rates paid by all European countries.   
The reason underlying this choice is  that there are some countries experiencing lower interest rates than Germany. Consequently, it allows to avoid negative spreads. 
Thus, the credit spread can be interpreted as the premium for the higher risk paid by a given country compared to the ideal situation where the country pays the minimum value. 
Therefore, the long-term interest rate of sovereign government bonds are collected from the \emph{investing.com} web-site, on a daily scale. The data are available for all countries only starting from April, $26$, $2010$. Thus, the observation period ranges between this date and June $26$, 2018.
Figure \ref{rir} shows the evolution of the interest rate and the credit spread paid in European Union over the observed period. After a peak around the end of $2011$ and the start of $2012$, both indicators show a decreasing trend. The difference between them is relevant until the 2012 and then it decreases, suggesting that the minimum value has decreased over time. This difference is ,in fact 
 equal to the minimum value of interest rates multiplied by the number of considered countries.  

\begin{figure}[!ht]
\centering
\includegraphics[width=0.8\textwidth]{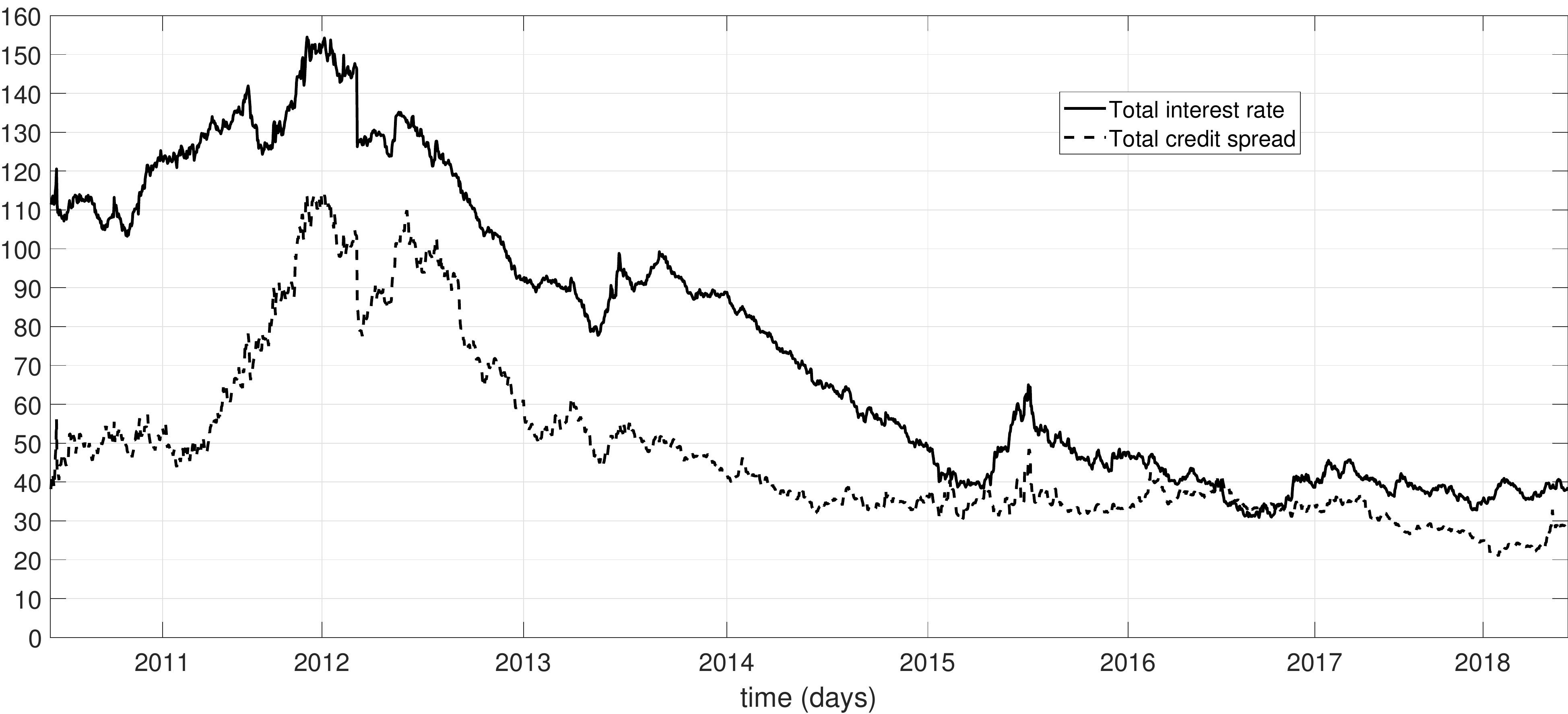}
\caption{observed evolution of the total credit spread and total interest rate}\label{rir}
\end{figure}
\begin{figure}[!ht]
\begin{minipage}{1\textwidth}
\centering
\includegraphics[width=0.55\textwidth]{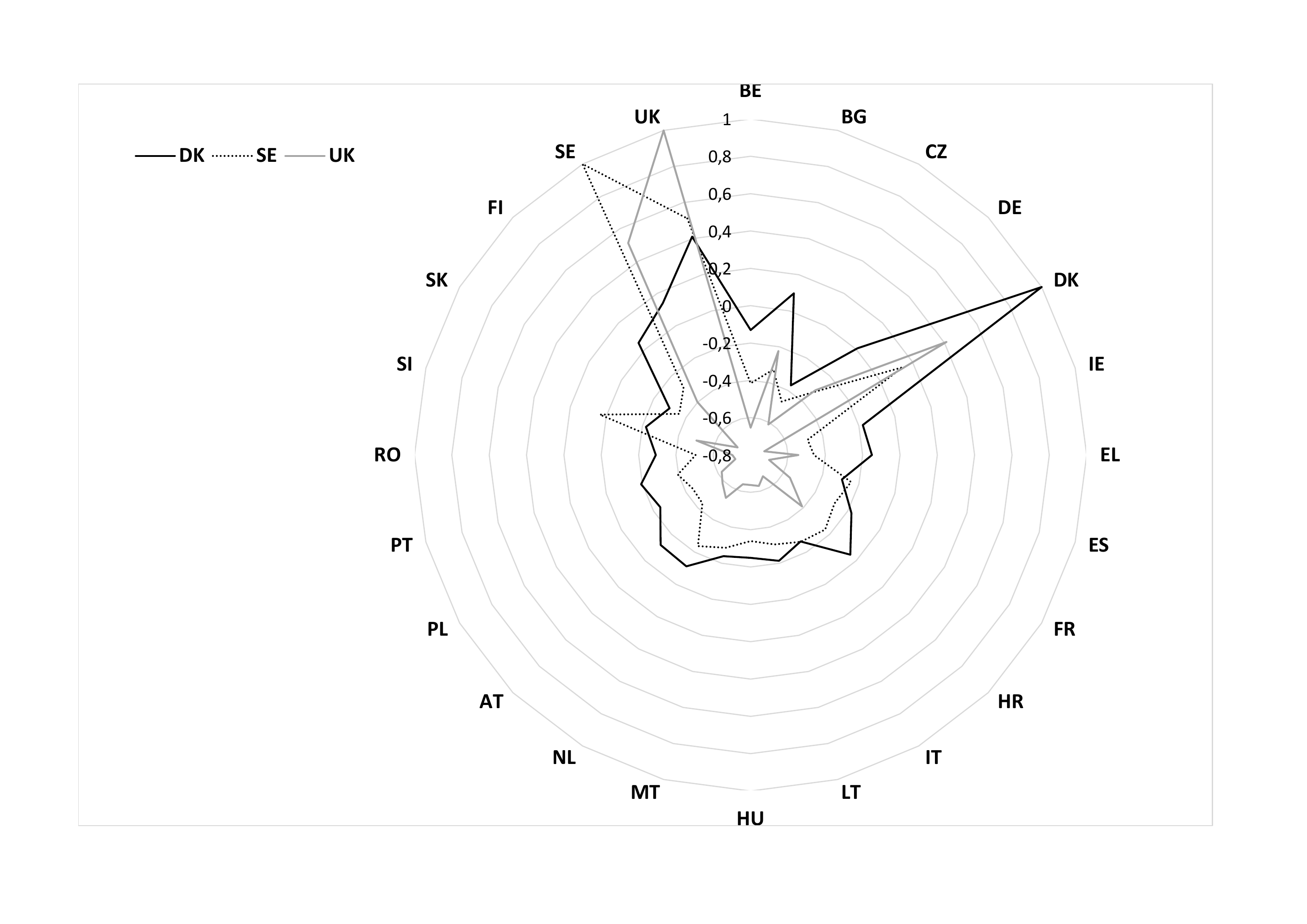}
\caption{Correlation coefficient of the credit spread's time series between Denmark (black line), Sweden (dashed line), United Kingdom (gray line) and all other countries.  BE: Belgium, BG: Bulgaria, CZ: Czech Republic, DE: Germany, DK: Denmark, IE: Ireland, EL: Greece, ES: Spain, FR: France, HR: Croatia, IT: Italy, LT: Lithuania, HU: Hungary, MT: Malta, NL: Netherlands, AT: Austria PL: Poland, PT: Portugal, RO: Romania, SI: Slovenia, SK: Slovakia, FI: Finland, SE: Sweden, UK: United Kingdom}\label{negative:correlation}
\end{minipage}
\vspace{0.2cm}
\begin{minipage}{0.49\textwidth}
\includegraphics[width=1.1\textwidth]{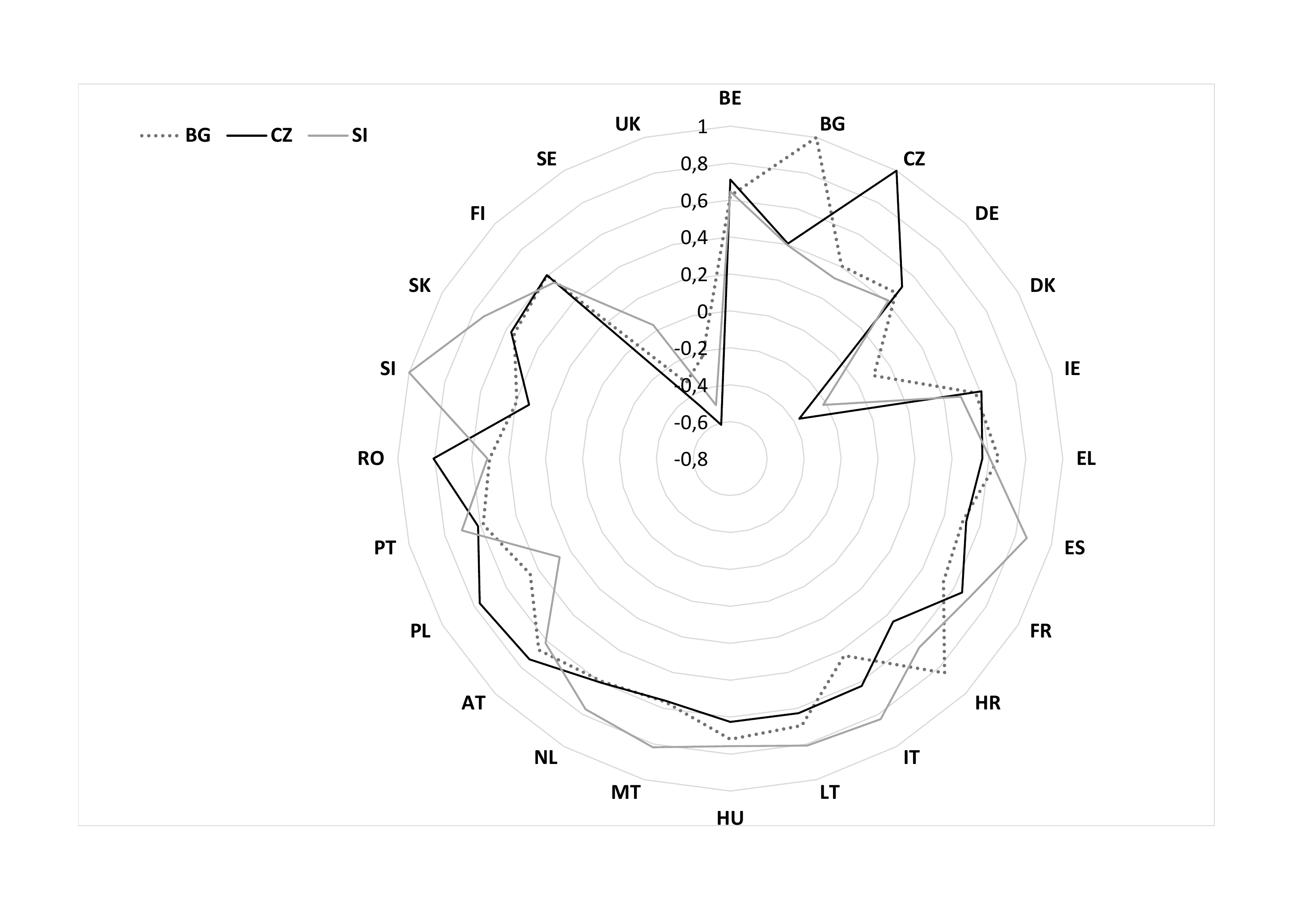}
\end{minipage}
\hspace{0.05cm}
\begin{minipage}{0.49\textwidth}
\includegraphics[width=1.1\textwidth]{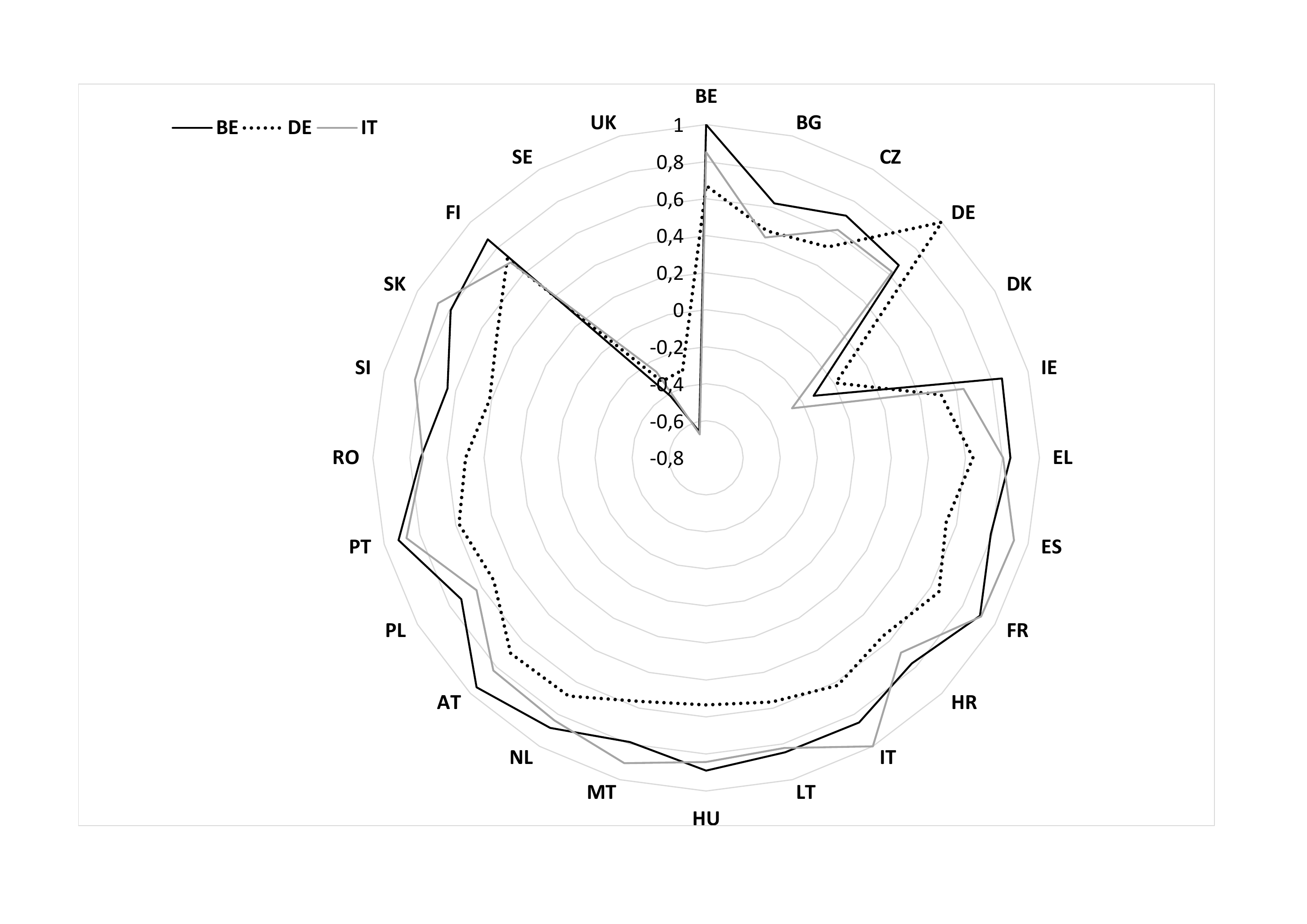}
\end{minipage}
\caption{Left panel: Correlation coefficient of the credit spread's time series between Bulgaria (dashed line), Czech Republic (black line), Slovenia (gray line) and all other countries. Right panel: correlation coefficient between Belgium (black line), Germany (dotted line), Italy (gray line) and all other countries.}\label{positive:correlation} 
\end{figure}
Moreover, analysis on the spread times series shows that European countries are positively correlated, being the coefficient of correlation between most of countries higher than $0,5$. However, exceptions are represented by Denmark, Sweden and United Kingdom. According to Figure \ref{negative:correlation}, these countries are negatively correlated with respect to all other countries, but they are positively correlated between each other.
Figure \ref{positive:correlation} shows the correlation coefficient estimated between  2 different sub-sets of countries and the whole set. In the left panel Bulgaria, Czech Republic and Slovenia are represented, while in the right one Belgium, Germany and Italy are shown\footnote{It is worth noting that we represent only  this sub-sets as they are representative of the whole group. The rest of the countries have a correlation structure which is similar to that of Belgium, Italy and Germany and that of Bulgaria, Czech Republic and Slovenia. Whereas the first group has a different structure.}. 
Bulgaria, Czech Republic and Slovenia are positively correlated with all other countries (except for Denmark, Sweden and United Kingdom) as the second group. However, the positive / negative  correlation of the first group is smaller / higher than the second one. 

The peculiarities of the data shown above push us to propose a model that generalizes that of our previous works (see \cite{discrete} and \cite{continuous}). The ratings dynamics are modelled according to a piecewise homogeneous Markov chain in order to include changes in rating dynamics and to explain the different behaviour illustrated in Table \ref{up:down}. 
Moreover, the inclusion of a stochastic process describing the spread evolution and its dependence among countries allows for the high correlation among countries. Details about the proposed methodology are given in the next section.

\section{Modelling rating and credit spread dynamics}
\label{model}
Let $\mathcal{C}$ be a set of $N$ countries. At any time $t \in \N$ (discrete times can refer to months, weeks or days, depending on the granularity of the study), each country $c$ $ \in \mathcal{C}$  is rated by financial agencies;  its rating assignment $x^{c}(t)$ belongs to an ordered set ${E} :=\{1 , 2 , ...,  D \}$, where $D$ denotes the number of possible rating assignments. The rating assignment of a country aims at reflecting the financial viability of the country and is derived from the financial, economic, fiscal and political situations of the country\footnote{See the rating policies of the three agencies mentioned above on their web sites.}.
% In particular, $x^c(t)=1$ denotes the highest credit quality. Thus, countries with this rating assignment have an exceptionally strong capacity to cope with its financial commitments. Lower  credit rating assignments, i.e. $x^c(t) =2, .. , 7$, imply that the countries are gradually less able to face with their financial obligations up to the situation where the countries are close to the bankruptcy.
 % The latter is the case of Default, i.e. $x^c(t)=8$. 
As usually done in credit rating studies, the sequences $x^{c}(t)$, $c \in \mathcal{C}$ are assumed to be realizations of stochastic processes $\textbf{X}^c := (X^c(t))_{t \in \N}$, $c \in \mathcal{C}$.

At the same time, a country $c$ can borrow money from financial institutions (or from the private investors) up to the application of an interest rate $\textrm{ir}^c(t)$. Again, the sequences $\textrm{ir}^c(t)$, $c \in \mathcal{C}$, are assumed to be realizations of stochastic processes $\textbf{IR}^c := (IR^c(t))_{t \in \N}$. From the interest rates, we derive the credit spreads $s^c(t)$, $c \in \mathcal{C}$ at any time $t$: for a country $c$, it is the difference between the interest rate ($ir^c(t)$) paid at time $t$ and the minimum value among all interest rates at the same time. Precisely,
\begin{equation}\label{cs}
s^c(t) := ir^c(t) - \underset{d \in \mathcal{C}}{\min}\{ir^d(t)\},
\end{equation}
or, in terms of processes, $\textbf{S}^c := \textbf{IR}^c - \underset{d \in \mathcal{C}}{\min}\textbf{IR}^d $.

The processes $\textbf{X}^c$ and $\textbf{S}^c$ evolve jointly: unfavourable situations for country $c$ yield low rating by agencies and high interest rate and possibly high credit spread (if the situation of the country paying the minimum value is less unfavourable). 

Let us now describe the modelling for the rating and credit spread processes. The following assumptions hold all along the paper.

\begin{Assumption} \label{Assiid}
 The processes $\textbf{X}^c$, $c \in \mathcal{C}$, are independent and identically distributed (i.i.d.). 
 
 In the following, we shall denote by $\textbf{X}$ a process drawn from their common distribution.
\end{Assumption}

Assumption~\ref{Assiid}  is motivated by theoretical and practical requirements: we need the model to be simple and flexible so as to perform estimation.  In addition, due to the sparsity of data, the estimation of a correlation structure between countries may be a difficult task. 

\begin{Assumption} \label{AssPiecewise}
 The process $\textbf{X}$ is a piecewise homogeneous Markov chain taking values in the ordered finite set $E = \{1, \dots , D\}$, i.e., there exist a positive number
$k \in \N^* \cup\{\infty\}$, a sequence $\tau_0 = 0 < \dots < \tau_k = \infty$ of increasing times and a sequence ${}^{(0)} \bfP, \dots, {}^{(k)}\bfP$ of stochastic matrices  such that for any $l \in \N, l \leq k$, for any $t \in \{\tau_l,\dots, \tau_{l+1}-1\}$ and any $x,y  \in E, \quad \textbf{x}_{0:{(t-1)}} = (x_0, \dots, x_{t-1}) \in E^t$, the following Markov property holds:
 $$\P \left( X(t+1) = y | X(t) = x,  X(0:(t-1)) = x_{0:(t-1)} \right) = \P \left( X(t+1) = y | X(t) =x \right) = {}^{(l)}p_{xy},$$
 where $X(0:(t-1)) = (X(0), \dots, X(t-1))$ and ${}^{(l)}p_{xy}$ denotes the transition probability  from $x$ to $y$ according to the matrix ${}^{(l)} \bfP$.
\end{Assumption}

Assumption~\ref{AssPiecewise} means that the rating process evolves according to a Markovian dynamic which may change in time. 
The Markovian assumption is widely used in financial literature concerning credit rating dynamics (see e.g. \cite{bangia2002}, \cite{belkin98}, \cite{JS}, \cite{nickell00}), most of all being concerned with sovereign credit ratings (see e.g.,\cite{fuertes_sovereign}, \cite{hu2002}, \cite{sovereign2015a} and \cite{wei_multi}). In our model, the homogeneity property is limited to some sub-periods composing the time line. This is motivated by the existence of some events that result in abrupt changes in financial situations of countries or rating policies (such as a financial crisis). The parameter $k$ in Assumption~\ref{AssPiecewise} represents the number of such abrupt changes. The time intervals $\{\tau_l, \dots, \tau_{l+1}-1\}$, for $l \in \{0, \dots, k-1\}$ correspond to the periods on which the rating dynamics are fixed and described by the transition matrix ${}^{(l)} \bfP$.
 The detection of  these abrupt changes ( also called change-points) for homogeneous Markov chain are studied in \cite{detect},  \cite{online:cp} and \cite{cp2012}, but it has been never applied to this financial problem.
\begin{Assumption} \label{AssCondDist}
 For any time $t \in \N$ and any country $c \in \mathcal{C}$, the conditional distribution of the credit spread $S^c(t)$ knowing $X^c(t) =x$, with $x \in E$, does not depend on $t$, nor $c$; we denote it $F_x$ and assume that $F_x$ is continuous, with density function $f_x$. Mathematically,
 \begin{equation} \label{EqnCondDistSpread}
 F_x := \mathcal{D}(S^c(t) | X^c(t) =x), \quad \textrm{for any } t \in \N, c\in \mathcal{C}.
 \end{equation}
 In the following, we will denote $W_x$ a random variable with distribution $F_x$; it will be substituted to $S^c(t)$, restricted to the event $X^c(t) = x$, when suitable, for  simplifying expressions of some conditional expectations such as $\mathbb{E}(S^c(t) | X^c(t) =x) = \E(W_x)$.
\end{Assumption}

Assumption~\ref{AssCondDist} stems from the recognition regarding the influence on the credit spread evolution of the rating dynamics. This is formalized through the assumption of a common spread distribution for countries with the same rating assignment.

\begin{Assumption} \label{AssCopula}
For any time $t \in \N$, the conditional joint distribution of $(S^1(t), \dots, S^N(t))$ knowing $(X^1(t) = x^1, \dots, X^N(t) = x^N)$, with $(x^1, \dots, x^N) \in E^N$ is given by
 $$\mathcal{D}(S^1(t), \dots, S^N(t) | X^1(t) = x^1, \dots, X^N(t) = x^N) =C_\theta (F_{x^1}, \dots, F_{x^N}),$$
 where $F_{x}$, $x \in E$ are given by~(\ref{EqnCondDistSpread}) and $C_\theta$ is a parametric copula, with dependence parameter $\theta$.
 
 In the following, for a given $N$-tuple $\textbf{x} = (x^1, \dots, x^N) \in E^N$, we will denote $\textbf{W}_{\textbf{x}}:=(W_{x^1}, \dots, W_{x^N})$ a random vector with distribution $C_\theta(F_{x^1}, \dots, F_{x^N})$.
 %; it will be substituted to $\mathbb{E}(S_1(t), \dots, S_N(t) | X_1(t) = x_1, \dots, X_N(t) = x_N)$, when suitable.
\end{Assumption}

According to Assumption~\ref{AssCopula}, the multivariate stochastic process $(\textbf{S}^1, \dots, \textbf{S}^N)$ describing the credit spread evolution, is controlled by the Markov chains $\textbf{X}^c$, $c \in \mathcal{C}$, melt by some parametric copula. The use of the copula is justified by the existence of some dependences between the credit spreads of countries that the model has to render. 

 %%%%%%%%%%%%%%%%%%%%%%%%%%%%%%%%%%%%%%%%%%%%%%%%%%%%%%%%%%%%%%%%%%%%%%%%%%%%%%%%%%%%%%%%%%%%%%%%%%%%%%%%%%%%%%%%%%%%%%%%%%%%%%%%%%%%%%%%%%%%%%%%

\section{Financial risk  indicators}
\label{theil:index}
In this section we introduce the definition of the inequality measure advanced by Theil and its generalization for stochastic processes. Furthermore, two proxies of financial risk are presented: the  expected total credit spread paid by the whole group  and  the co-variance between the total credit spread paid by two countries. 
\subsection{Dynamic Theil index}
\label{dynamic:inequality}
One of the most important index used as a measure of inequality is the Theil index \cite{theil}. It is closely related to the Shannon entropy of a probability distribution \cite{shannon}. 

Given a probability distribution $p = (p^1, \dots, p^N)$ on a finite set with cardinal $N$ -- say $\{1, \dots, N\}$, Theil index $T(p)$ of $p$ is defined as the Kullback-Leibler (KL) divergence $\mathbb{K}(p|u)$ between ${p}$ and the uniform distribution $u$, or equivalently, as the difference between $\log(N)$ and the Shannon entropy $S({p})$. Precisely,
\begin{equation}\label{EqnStaticTheilIndex}
T(p) := \mathbb{K}(p|u) := \sum_{i=1}^N p^i \log(N \cdot p^i) = \log(N) - S(p), 
\end{equation}
where $S(p) = -\sum_{i = 1}^N p^i \log p^i$.

The definition of Theil index has been extended for stochastic processes by \cite{d2012}; see also \cite{d2014} for an additive decomposition of this index. Based on these references, we now introduce the dynamic Theil index of credit spreads, which we use to assess the inequality of the financial risk distribution. 

Let  the share of credit spread of a country $c \in \mathcal{C}$ at time $t \in \N$ be defined as the proportion of its credit spread $S^c(t)$ relative to the sum -- or total, of credit spreads of all countries $TS(t) :=  \sum_{d \in\mathcal{C}} S^d(t)$; mathematically,
\begin{displaymath}
 sh^c(t) = \frac{S^c(t)}{TS(t)} = \frac{S^c(t)}{ \displaystyle \sum_{d \in \mathcal{C}} S^d(t)}. 
\end{displaymath}
The vector of shares of credit spreads  at time $sh(t) := (sh^c(t))_{c \in \mathcal{C}}$ defines a probability distribution on the set of countries $\mathcal{C}$.  Note that $\textbf{sh} := (sh(t)_{t \in \N})$ is a stochastic process (taking its values in the set of probability distributions on $\mathcal{C}$) that depends on  the stochastic processes $\textbf{S}^c$, $c \in \mathcal{C}$.
We call dynamic Theil index of credit spread, the stochastic process $\D$, namely
\begin{equation} \label{EqnDynamicTheilIndex}
\D = \sum_{c \in \mathcal{C}} sh^c(t) \log(N \cdot sh^c(t)), \quad t \in \N.
\end{equation}

Both deterministic and dynamic Theil indices satisfy the following properties, that immediately stem from Shannon entropy's or KL divergence's (see e.g., \cite[Chapter 2]{cover}); they are formulated here for dynamic Theil entropy: 
\begin{enumerate}
\item $\D$ belongs to $[0,\log(N)]$ for any $t \in \N$, the lower and the upper bounds are achieved respectively  for the uniform distribution and for any Dirac measure. 
The lower bound corresponds to the perfect equi-distribution of financial risk as all countries pay the same amount of the credit spread. On the other hand, the upper bound refers to the concentration of the financial risk as one country pays the total spread. 
\item It is additively decomposable. Particularly, if we denote $g_x(t) := \{ c \in \mathcal{C} : X^c(t) = x  \}$ the subset of countries with rating assignment $x \in \{1, \dots, D\}$ at time $t \in \N$, the dynamic Theil index $\D$ is the sum of the Theil index of the distribution of share of credit spread by subset $q(t) = (q_1, \dots, q_D)$, where
$$q_x = \sum_{c \in g_x} sh^c, \quad x \in \{1, \dots, D\}$$
and of the average of Theil indices of subsets. Namely, dropping out the dependence in time in notations for sake of simplicity,
\begin{equation}\label{decomposed}
\D = DT(q) + \sum_{x =1}^D q_x \sum_{c \in g_x}  sh_{c|g_x} \log sh_{c|g_x}, 
\end{equation}
where $sh_{c|g_x} = \frac{sh_c}{q_x}$, for $c \in g_x$ represents the conditional distribution of share of credit spread of country $c$ knowing that $c$ belongs to subset $g_x$ (i.e., its rating assignment is $x$).
$DT(q)$ is the inequality measure between  rating classes $g_x$, $x = 1, \dots, D$ -- say the inter-class inequality measure, while $$ \sum_{x =1}^D q_x \sum_{c \in g_x}  sh_{c|g_x} \log sh_{c|g_x} = \sum_{x = 1}^D q_xDT(sh_{.|g_x}),$$ is a measure of the inequality within groups -- say the intra-group inequality measure. 

\item It is sensitive to the transfer of credit spreads in the population. This means that the Theil index is more sensitive to migration in the lower tail of the distribution of the credit spreads than it is to migration in the upper tail. 
\item It is invariant with respect to class permutation. This means that its value would be close to zero if all $ c  \in \mathcal{C}$ paid similar credit spreads, disregarding the amount of credit spreads, e.g. the class occupied by all countries. Furthermore, with increasing value of the credit spreads, $\D$ become smaller.
\end{enumerate}

The dynamic Theil index  is better summarized by computing the first order moment, $\E_0[\D]$.
\begin{proposition}
According to assumptions \textbf{A1 - A4}:
\begin{equation}\label{edt}\small 
\begin{aligned}
\E_0[\D] = &\sum_{i=1}^N \sum_{(a_1, \dots, a_N) \in E^N} \left[ \prod_{h=1}^N \sum_{(b_1, \dots, b_l)\in E^l} \prod_{d=0}^{l-1}  \, ^{(d)}P_{b_d, b_{d+1}}^{(\tau_{d+1}-\tau_d)} \cdot ^{(l)}P_{b_{d+1},a_h }^{(t - \tau_{d+1})} \right]\\
& \cdot \int_{0}^{+\infty} \cdots \int_{0}^{+\infty} \E \left(\frac{z_{a_i}}{\sum_{j=}^N z_{a_j}} \log N \frac{z_{a_i}}{\sum_{j=}^N z_{a_j}} \right) f_{(a_1, \dots, a_N)}\left(z_{a_1}, z_{a_2}, \dots, z_{a_N} | dz_{a_1}, dz_{a_2}, \dots, dz_{a_N}\right).
\end{aligned}
\end{equation}
where, 
\begin{equation}\small
f_{X^1(t) \dots X^{N}(t)} = \frac{\partial^N F_{X^1 \dots  X^{N}}}{\partial y_1 \cdots  \partial y_N} = c_{\theta}\left(F_{X^1(t)}(y_1),\dots ,  F_{X^N(t)}(y_N)\right) f_{X^1(t)}(y_1) \cdot \ldots \cdot f_{X^N(t)}(y_N).
\end{equation}
\end{proposition}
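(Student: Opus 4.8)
The plan is to derive~(\ref{edt}) from three elementary ingredients: linearity of the expectation applied to the $N$ country-terms that define $\D$, the tower property conditioning on the whole rating configuration $\mathbf{X}(t) = (X^1(t),\dots,X^N(t))$, and the structural reductions supplied by Assumptions~\ref{Assiid}--\ref{AssCopula}. First I would write, using the definition~(\ref{EqnDynamicTheilIndex}),
$$\E_0[\D] = \sum_{i=1}^N \E_0\!\left[\, sh^i(t)\,\log\!\big(N\, sh^i(t)\big)\right].$$
Each summand is finite because $sh^i(t) \in [0,1]$ and the map $x \mapsto x\log(Nx)$ extends continuously to a bounded function on $[0,1]$ (it lies between $-1/(Ne)$ and $\log N$); this boundedness will also dominate every integrand appearing below, so Fubini applies and integrability is never an actual difficulty in this argument.

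Next, for a fixed index $i$, I would apply the law of total expectation by conditioning on $\{\mathbf{X}(t) = (a_1,\dots,a_N)\}$ and summing over $E^N$:
$$\E_0\!\left[ sh^i(t)\log(N sh^i(t))\right] = \sum_{(a_1,\dots,a_N)\in E^N} \P_0\big(\mathbf{X}(t) = (a_1,\dots,a_N)\big)\; \E_0\!\left[ sh^i(t)\log(N sh^i(t)) \,\big|\, \mathbf{X}(t) = (a_1,\dots,a_N)\right].$$
The probability factor splits by Assumption~\ref{Assiid}: $\P_0(\mathbf{X}(t) = (a_1,\dots,a_N)) = \prod_{h=1}^N \P_0(X^h(t) = a_h)$, and each marginal is computed from Assumption~\ref{AssPiecewise} by a Chapman--Kolmogorov chaining along the change-points $\tau_0 = 0 < \tau_1 < \dots < \tau_l \le t < \tau_{l+1}$: inserting the summed-out intermediate states $b_1,\dots,b_l$ occupied at the change-point instants and applying the homogeneous matrix ${}^{(d)}\bfP$ on each block, the last block having length $t-\tau_l$, reproduces exactly the bracketed factor of~(\ref{edt}) (the free index $b_0$ encoding the initial rating of country $h$). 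The conditional-expectation factor is handled by Assumptions~\ref{AssCondDist} and~\ref{AssCopula}: conditionally on $\mathbf{X}(t) = (a_1,\dots,a_N)$ the spread vector $(S^1(t),\dots,S^N(t))$ has joint law $C_\theta(F_{a_1},\dots,F_{a_N})$ with the density $f_{(a_1,\dots,a_N)}$ displayed in the statement, so the conditional expectation equals the $N$-fold integral over $(0,\infty)^N$ of $\frac{z_i}{\sum_{j=1}^N z_j}\log\!\big(N\frac{z_i}{\sum_{j=1}^N z_j}\big)$ against that density.

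Plugging the two evaluated factors back into the double sum yields~(\ref{edt}) term by term, with $i$ the index of the country whose share of credit spread is under consideration. The only step that requires genuine care is the computation of the marginal $\P_0(X^h(t) = a_h)$ for the piecewise-homogeneous chain: one must correctly locate the block index $l$ with $\tau_l \le t < \tau_{l+1}$ and concatenate the homogeneous $m$-step sub-kernels ${}^{(d)}\bfP^{(m)}$ carrying different matrices on consecutive blocks, taking the final exponent to be $t - \tau_l$ rather than $\tau_{l+1} - \tau_l$; all remaining manipulations are immediate consequences of linearity, the tower property, and the copula representation of the conditional spread law, with no convergence concern thanks to the boundedness of $x\log(Nx)$ on $[0,1]$.
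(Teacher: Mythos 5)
Your proposal is correct and follows essentially the same route as the paper's own proof: linearity over the $N$ share terms, the tower property conditioning on the rating configuration $(X^1(t),\dots,X^N(t))$, factorization of the joint rating probability by Assumption~1, Chapman--Kolmogorov chaining across the change-points for the piecewise-homogeneous chain (with final exponent $t-\tau_l$), and evaluation of the conditional expectation as an $N$-fold integral against the copula density from Assumptions~3--4. The only addition is your explicit integrability remark via the boundedness of $x\log(Nx)$ on $[0,1]$, which the paper leaves implicit.
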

\begin{proof}
Let $(i_1, \dots , i_N)$ be the vector collecting the rating assignments of all $N$ countries at the initial time $t=0$ and let $(a_1, \dots, a_N)$ the vector collecting the rating assignment in a given future period t. The expected value is given by 
\begin{equation}\small\label{edt1}
\E_0\left[T(\textbf{p}(t))\right] = \E_0\left[\sum_{i=1}^N p^i (t)\log(N p^i(t))\right] = \sum_{i=1}^N \E \left[ \frac{S^i(t)}{\sum_{j=1}^N S^j(t)} \log \left(N \cdot \frac{S^i(t)}{\sum_{j=1}^N S^j(t)}\right) \right]
\end{equation}
\begin{equation}\small\label{edt2}
 = \sum_{i=1}^N \E \left[ \E \left[ \frac{S^i(t)}{\sum_{j=1}^N S^j(t)} \log \left(N \cdot \frac{S^i(t)}{\sum_{j=1}^N S^j(t)}\right) \bigg{|} \left(X^1(t), \dots , X^N(t) \right) \right]\right],
\end{equation}
According to the assumption \textbf{A3} (\ref{edt2}) becomes;
\begin{equation}\small\label{edt3}
\begin{aligned}
\E_0\left[\D \right] = &\sum_{i=1}^N \sum_{(a_1, \dots , a_N)\in E^N}\P\left(X_1(t)=a_1, \dots , X^N(t)=a_N | X^1(0)=i_1, \dots, X^N(0)=i_N \right)\\
&\cdot \E \left[ \frac{W_{a_i}}{\sum_{j=1}^N W_{a_j}} \log \left(N \cdot \frac{W_{a_i}}{\sum_{j=1}^N W_{a_j}}\right)\right]
\end{aligned}
\end{equation}
From the piecewise Markov chain assumption (\textbf{A2}) it follows that, given $t \in \N$,
\begin{equation}\small\label{edt4}
\exists \hspace{0.2cm} l \in \{ 0, 1, 2, \dots, k \} : t \in \{ \tau_l, \dots, \tau_{l+1} - 1 \}.
\end{equation}
and thus, that 
\begin{equation}\small\label{edt5}
\begin{aligned}
 & \P\left(X^h(t) = a_h | X^h(0) = i_h \right)= \\
 &= \sum_{b_1 \in E} \sum_{b_2 \in E} \cdots \sum_{b_l \in E} \P \left( X^h(t) = a_h, X^h(\tau_l) = b_l, \dots, X^h(\tau_2) = b_2, X^h(\tau_1) = b_1 | X^h(0) = i_0 \right),
\end{aligned}
\end{equation}
which can be rewritten as
 \begin{equation}\small\label{edt6}
 \begin{aligned}
   \P\left(X^h(t) = a_h | X^h(0) = i_h \right) 
 & = \sum_{(b_1, \dots, b_l) \in E} \P \left( X^h(t) = a_h | X^h(\tau_l) = b_l  \right) \cdot  \P \left( X^h(\tau_l) = b_l | X^h(\tau_{l-1}) = b_{l-1}\right) \cdot \dots\\
 & \cdot  \P \left( X^h(\tau_2) = b_2 | X^h(\tau_1) = b_1  \right) \cdot  \P \left( X^h(\tau_1) = b_1 | X^h(0) = i_0  \right) \\
  &= \sum_{(b_1, \dots, b_l) \in E} \hspace{0.1cm} ^{(l)} P^{(t-\tau_l)}_{b_l, a_h} \cdot ^{(l-1)} P^{(\tau_l-\tau_{l-1})}_{b_{l-1}, b_l } \cdot ... \cdot ^{(1)} P^{(\tau_2 - \tau_1)}_{b_1, b_2} \cdot ^{(0)} P^{(\tau_1)}_{i_0, b_1} \\
&= \sum_{(b_1, \dots, b_l) \in E} \prod_{d=0}^{l-1} \hspace{0.1cm}^{(d)} P^{(\tau_{d+1} - \tau_d)}_{b_d, b_{d+1}} \cdot ^{(l)} P^{(t - \tau_{d+1})}_{b_{d+1}, a_h}.
\end{aligned}
 \end{equation}
Therefore, by substitution of (\ref{edt6}) into (\ref{edt3}) we obtain
 \begin{equation}\label{edt8}\small 
\begin{aligned}
\E_0[\D] =& \sum_{i=1}^N \sum_{(a_1, \dots, a_N) \in E^N} \left[ \prod_{h=1}^N \sum_{(b_1, \dots, b_l)\in E^l} \prod_{d=0}^{l-1}  \, .^{(d)}P_{b_d, b_{d+1}}^{(\tau_{d+1}-\tau_{d})} \cdot \, ^{(l)}P_{b_{d+1},a_h }^{(t - \tau_{d+1})} \right]\\
& \cdot \E \left[ \frac{W_{a_i}}{\sum_{j=1}^N W_{a_j}} \log \left(N \cdot \frac{W_{a_i}}{\sum_{j=1}^N W_{a_j}}\right)\right].
\end{aligned}
\end{equation}
Finally, the last term of (\ref{edt8}) can be computed, under assumptions \textbf{A4}, according to:
\begin{equation}\small\label{edt9}
\begin{aligned}
  &\E \left[ \frac{W_{a_i}}{\sum_{j=1}^N W_{a_j}} \log \left(N \cdot \frac{W_{a_i}}{\sum_{j=1}^N W_{a_j}}\right)\right]  \\
&=\int_{0}^{+\infty} \cdots \int_{0}^{+\infty}  \left(\frac{z_{a_i}}{\sum_{j=}^N z_{a_j}} \log N \frac{z_{a_i}}{\sum_{j=}^N z_{a_j}} \right) f_{(a_1, \dots, a_N)}\left(z_{a_1}, z_{a_2}, \dots, z_{a_N} | dz_{a_1}, dz_{a_2}, \dots, dz_{a_N}\right).
\end{aligned}
\end{equation}
 and thus, (\ref{edt}) holds.
\end{proof}

\subsection{The total credit spread}
\label{total:spread}
As discussed in Section \ref{dynamic:inequality} the Theil index, both in its deterministic and dynamic formalization, is invariant with respect to class permutation. Thus, a measure of the total risk is needed to better understand and interpret the financial risk.
For seek of clarity, let make a very simple example. Suppose we have two situations where five countries pay, for four periods, the credit spread shown in Table \ref{example}.
\begin{table}[!ht]
\centering
\begin{tabular}{|c|ccccc|c|ccccc|}
\hline
case 1 &a&b&c&d&e&case 2&a&b&c&d&e\\
\hline
$t=1$&     2&     4&     5&     6&     3&  $t=1$ &  2&     4&     5&     6&     3\\
\hline
$t=2$&    12&    14&    15&    16&    13&  $t=2$ &  3&     5&     6&     7&     4\\
\hline
$t=3$&    22&    24&    25&    26&    23&  $t=3$ &  4&     6&     7&     8&     5\\
\hline
$t=4$&    32&    34&    35&    36&    33&  $t=4$ &  5&     7&     8&     9&     6 \\
\hline
\end{tabular}\caption{Credit spread paid by five agents for four periods.}\label{example}
\end{table}
The total spread paid by the sample in the two different situations is $TC_1=\{20, 70, 120, 170 \}$ and $TC_2= \{ 20, 25, 30, 35 \}$. 
\begin{figure}[!h]
\centering
\includegraphics[width=0.7\textwidth]{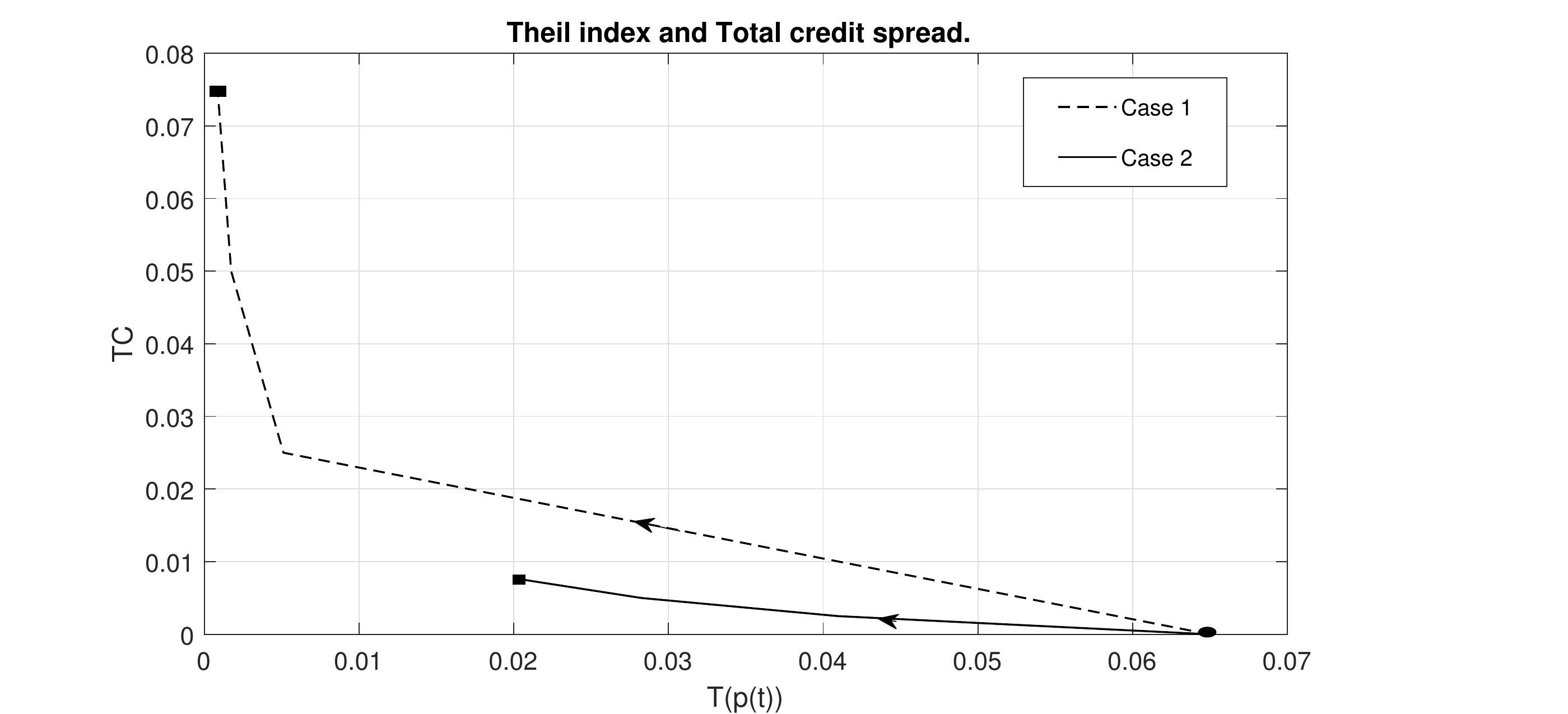}
\caption{Theil index and Total spread computed for the first case -dashed line- and for the second one -continuous line}\label{tdelta}
\end{figure}
The resulting inequality measure, computed according to Equation (\ref{EqnStaticTheilIndex}), is shown in the Figure \ref{tdelta}, along with the total spread relative variation.
Although the differences among countries are the same in both cases, the value of the Theil index is lower in the first case, corresponding to the higher values of the total spread. By considering only the inequality measure, the results  suggest that the best situation is that regarding the first case. For instance, the indices start from the same value, i.e. 0.065 corresponding to the black circle in the figure, but they evolve differently, reaching 0.0009 for the first case and 0.02 in the second case (the ending values are represented by the two rectangles). However, while looking at variation of the total spread, the second sample shows a better situation as the total spread is lower and it increases more gradually that the first one.   
Thus, the evolution of the two measures should be considered together in order to precisely evaluate the financial risk with respect to its distribution among countries and to the amount of it over a given period.  

To quantify the total credit spread paid the whole sample we proceed as follows. 
Let denote $b_0  = x^c(0) = i$ and define $\forall c \in \mathcal{C}$, $\forall  t \in [\tau_l, \tau_{l+1}-1] $ 
\begin{equation}
\begin{aligned}
\,^{[1:l]}P_{ij}^{(t)} &= \P\left(X(t) = j | X(0) = i\right),
\end{aligned}
\end{equation}
the transition probability from state $i$ to state $j$ in any  given sub-periods. According to  (\ref{edt6}), it is given by 
\begin{equation}\label{ciserve}
\begin{aligned}
\,^{[1:l]}P_{ij}^{(t)} &= 
\sum_{(b_1, \dots, b_l)\in E^l} \prod_{d=0}^{l-1} \,^{(d)}P_{b_d, b_{d+1}}^{(\tau_{d+1}-\tau_d)} \cdot \,^{(l)}P_{b_{d+1},j}^{(t - \tau_{d+1})}. 
\end{aligned}
\end{equation}
\begin{Def}\label{tc}
The expected total credit spread paid by country $c$ given its rating assignment, is defined as 
\begin{equation}\label{etc}
V_i(t) = \E[TC^c(0,t)]	 := \E\left[\sum_{s=1}^t \sum_{j=1}^D \mathbbm{1}_{\{X^c(t) = j | X^c(0) = i\}} S^c(s)\right] .
\end{equation}
\end{Def}
\begin{proposition}\label{p2}
Under assumptions \textbf{A1 - A4} the expected total credit spread paid by country $c$ is given by:
\begin{equation}\label{v1}
V_i(t)	= V_i(t-1) + \sum_{j=1}^D \,^{[1:l]}P_{ij}^{(t)} \cdot \int_0^{+\infty} \overline{F}_j(y)dy,
\end{equation}
\end{proposition}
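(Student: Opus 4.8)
The plan is to first derive the non-recursive closed form
$$V_i(t) = \sum_{s=1}^t \sum_{j=1}^D {}^{[1:l(s)]}P_{ij}^{(s)} \int_0^{+\infty}\overline{F}_j(y)\,dy,$$
where $l(s)$ is the unique index with $s \in \{\tau_{l(s)}, \dots, \tau_{l(s)+1}-1\}$ (well defined by Assumption~\ref{AssPiecewise}), and then to read off~(\ref{v1}) by isolating the $s=t$ term and recognising the remaining sum as $V_i(t-1)$.

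First I would unwind Definition~\ref{tc}: since $\sum_{j=1}^D \mathbbm{1}_{\{X^c(s)=j\}} = 1$ almost surely, the inner sum and indicator in~(\ref{etc}) amount to a partition-of-unity rewriting, so that $TC^c(0,t)$ is simply $\sum_{s=1}^t S^c(s)$ evaluated on the event $\{X^c(0)=i\}$; hence $V_i(t) = \E\big[\sum_{s=1}^t S^c(s) \,\big|\, X^c(0)=i\big]$. Linearity of expectation gives $V_i(t) = \sum_{s=1}^t \E[S^c(s) \mid X^c(0)=i]$, and splitting off the last summand yields $V_i(t) = V_i(t-1) + \E[S^c(t)\mid X^c(0)=i]$.

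Next I would evaluate the one-step term by conditioning on the contemporaneous rating,
$$\E[S^c(t)\mid X^c(0)=i] = \sum_{j=1}^D \P\big(X^c(t)=j \mid X^c(0)=i\big)\,\E\big[S^c(t)\mid X^c(t)=j,\, X^c(0)=i\big].$$
The way $\textbf{S}^c$ is attached to $\textbf{X}^c$ (Assumption~\ref{AssCondDist}: the spread at time $t$ is driven by the current rating alone) lets me drop the conditioning on $X^c(0)$ inside the last expectation, so that it equals $\E[S^c(t)\mid X^c(t)=j] = \E[W_j]$. Because credit spreads are non-negative by construction in~(\ref{cs}), the tail formula $\E[W_j] = \int_0^{+\infty}\P(W_j>y)\,dy = \int_0^{+\infty}\overline{F}_j(y)\,dy$ applies. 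Finally, Assumption~\ref{Assiid} makes $\P(X^c(t)=j\mid X^c(0)=i)$ the same for every $c$, namely the transition probability of the common chain $\textbf{X}$, which by~(\ref{ciserve}) is exactly ${}^{[1:l]}P_{ij}^{(t)}$ for $t \in \{\tau_l, \dots, \tau_{l+1}-1\}$. Substituting back gives $\E[S^c(t)\mid X^c(0)=i] = \sum_{j=1}^D {}^{[1:l]}P_{ij}^{(t)}\int_0^{+\infty}\overline{F}_j(y)\,dy$, and hence~(\ref{v1}).

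The step needing the most care is the identity $\E[S^c(t)\mid X^c(t)=j,\,X^c(0)=i] = \E[W_j]$: Assumption~\ref{AssCondDist} as written conditions only on the current rating, so one must argue from the structure of the model — that $\textbf{X}^c$ controls $\textbf{S}^c$ and the spread inherits its randomness through $F_j$ alone — that $S^c(t)$ is conditionally independent of the initial rating given $X^c(t)$. The remaining ingredients (linearity, the telescoping into $V_i(t-1)$, the tail integral for a non-negative variable, and the appeal to~(\ref{ciserve})) are routine.
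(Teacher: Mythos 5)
Your proposal is correct and follows essentially the same route as the paper's proof: split off the $s=t$ term to obtain the recursion, use Assumption~\ref{AssCondDist} to replace $S^c(t)$ on $\{X^c(t)=j\}$ by $W_j$ with $\E[W_j]=\int_0^{+\infty}\overline{F}_j(y)\,dy$, and identify $\P(X^c(t)=j\mid X^c(0)=i)$ with $^{[1:l]}P_{ij}^{(t)}$ via (\ref{ciserve}). The conditional-independence point you flag (dropping $X^c(0)$ from the conditioning given $X^c(t)$) is handled no more explicitly in the paper, which simply factors $\E[\mathbbm{1}\cdot W_j]=\E[\mathbbm{1}]\,\E[W_j]$, so your treatment is if anything slightly more careful.
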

where $\overline{F}_j(y)$ is the survival function.
\begin{proof}\label{proof1}
$V_i(t)$ can be found recursively. Thus, we can decompose (\ref{etc}) in the following way: 
\begin{equation}\label{v2}
V_i(t) = \E\left[\sum_{s=1}^{t-1} \sum_{j=1}^D \mathbbm{1}_{\{X^c(t) = j | X^c(0) = i\}} S^c(s)\right] +  \E\left[ \sum_{j=1}^D \mathbbm{1}_{\{X^c(t) = j | X^c(0) = i\}} S^c(t)\right] ,
\end{equation}
Under assumption $\textbf{A3}$ and by Definition \ref{tc}, Equation (\ref{v2}) can be rewritten as  
\begin{equation}
\begin{aligned}
V_i(t) =
 &V_i(t-1) +  \E\left[ \sum_{j=1}^D \mathbbm{1}_{\{X^c(t) = j | X^c(0) = i\}} W_j\right]  \\
 &=V_i(t-1) + \sum_{j=1}^D \E\left[ \mathbbm{1}_{\{X^c(t) = j | X^c(0) = i\}}\right] \cdot \E[W_j].
 \end{aligned}
\end{equation}
 Equation (\ref{v1}) holds once we observe that $\sum_{j=1}^D \E\left[ \mathbbm{1}_{\{X^c(t) = j | X^c(0) = i\}}\right]$ is equal to Formula (\ref{ciserve}) and that $E[W_j]=\int_0^{+\infty} \overline{F}_j(y)dy$.
\end{proof}
\begin{remark}
Once the expected credit spread paid by each country has been computed, the expected total credit spread is given by: 

\begin{equation}\label{V}
V(t)=\E[TC(0,t)] = \sum_{j=1}^D n_j(0)\cdot V_j(t),
\end{equation}
where  the  $n_j(0)$,  denotes the initial number of countries allocated in rating class $j$.
\end{remark}
Thus, one could understand if the financial risk is considerable or not, while the measure of entropy suggests a measure of inequality. 

\subsection{Covariance between countries' total credit spread.}
\label{co:variance}
The covariance between the total spread paid by two countries is a useful indicator allowing to understand if the evolution of the credit spreads of some countries are dependent.
In order to compute this indicator we need the expected value of the product of the total credit spread paid by two countries, i.e. $V_{a_\alpha, a_\beta}^{(\alpha,\beta)}(t)$. 
\begin{Def}\label{v_ab}
The expected value of the product of the total credit spread paid by two countries  is defined 
\begin{equation}
\begin{aligned}
V_{a_\alpha a_\beta}^{(\alpha, \beta)}(t) :=  \E \left[ \left( \sum_{s=1}^t \sum_{j_\alpha=1}^D  \mathbbm{1}_{\{X^{(\alpha)}(s)=j_\alpha|X^{(\alpha)}(0) = a_\alpha\}}S^\alpha(s) \right)  \cdot \left( \sum_{s=1}^t \sum_{j_\beta=1}^D  \mathbbm{1}_{\{X^\beta(s)=j_\beta|X^\beta(0) = a_\beta\}}S^\beta(s) \right) \right],
\end{aligned}
\end{equation}
$\forall \alpha, \beta \in \mathcal{C}$ with $X^{\alpha}(0) =a_\alpha$, $X^{\beta}(0) = a_\beta$, $\forall t \in [\tau_l, \tau_{l+1}-1]$.
\end{Def}
\begin{proposition}\label{p3}
Under Assumptions \textbf{A1-A4}, that $V_{a_\alpha a_\beta}^{(\alpha, \beta)}(t)$ is given by
\begin{equation}\label{vab}
\begin{aligned}
V_{a_\alpha a_\beta}^{(\alpha, \beta)}(t)&= V_{a_\alpha a_\beta}^{(\alpha, \beta)}(t-1) 
+ V_{a_\alpha}(t-1) \left( \sum_{j_\beta =1}^D \,^{[1:l]}P_{a_\beta, j_\beta}^{(t)}\cdot \int_0^{+\infty}\bar{F}_{j_\beta}(x)dx \right) \\
& + V_{a_\beta}(t-1) \left( \sum_{j_\alpha =1}^D \,^{[1:l]}P_{a_\alpha, j_\alpha}^{(t)}\cdot \int_0^{+\infty}\bar{F}_{j_\alpha}(x)dx \right) \\
& +  \sum_{j_\alpha =1}^D  \sum_{j_\beta =1}^D \,^{[1:l]}P_{a_\alpha, j_\alpha}^{(t)} \,^{[1:l]}P_{a_\beta, j_\beta}^{(t)}\cdot \int_0^{+\infty} \int_0^{+\infty} z_1\cdot z_2 \cdot f^{(\alpha, \beta)}_{j_\alpha, j_\beta}(z_1,z_2) dz_1dz_2.
\end{aligned}
\end{equation}
\end{proposition}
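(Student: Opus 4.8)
The plan is to adapt the one-step recursion used in the proof of Proposition~\ref{p2} from a single total-credit-spread sum to a \emph{product} of two such sums, which produces four terms instead of two. Throughout I take all expectations conditionally on $\{X^\alpha(0)=a_\alpha,\,X^\beta(0)=a_\beta\}$, keep the identity $\sum_{j_c=1}^D\mathbbm{1}_{\{X^c(s)=j_c\}}=1$ implicit, and write $TC^c(0,t)=\sum_{s=1}^t\sum_{j_c=1}^D\mathbbm{1}_{\{X^c(s)=j_c\}}S^c(s)$ as in Definition~\ref{tc}.

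First I would split off the last time step in each factor, $TC^c(0,t)=TC^c(0,t-1)+\sum_{j_c}\mathbbm{1}_{\{X^c(t)=j_c\}}S^c(t)$ for $c\in\{\alpha,\beta\}$, substitute both decompositions into Definition~\ref{v_ab}, and expand the product:
$$\begin{aligned}
V_{a_\alpha a_\beta}^{(\alpha,\beta)}(t)={}&\E\big[TC^\alpha(0,t-1)\,TC^\beta(0,t-1)\big]+\E\big[TC^\alpha(0,t-1)\,S^\beta(t)\big]\\
&{}+\E\big[S^\alpha(t)\,TC^\beta(0,t-1)\big]+\E\big[S^\alpha(t)\,S^\beta(t)\big].
\end{aligned}$$
The first term is exactly $V_{a_\alpha a_\beta}^{(\alpha,\beta)}(t-1)$, which supplies the recursion.

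Next I would treat the two mixed terms. For $\E[TC^\alpha(0,t-1)\,S^\beta(t)]$ I re-insert $\sum_{j_\beta}\mathbbm{1}_{\{X^\beta(t)=j_\beta\}}$ and use Assumption~\textbf{A3} to replace $S^\beta(t)$ by $W_{j_\beta}$ on the event $\{X^\beta(t)=j_\beta\}$. Here $TC^\alpha(0,t-1)$ is a functional of country $\alpha$'s rating–spread history up to $t-1$, while $\mathbbm{1}_{\{X^\beta(t)=j_\beta\}}$ and $S^\beta(t)$ depend only on $\textbf{X}^\beta$ and on the time-$t$ spread vector; by Assumption~\textbf{A1} together with the independence of the time-$t$ spreads from the earlier history, these factor, so the term equals $\sum_{j_\beta}\E[TC^\alpha(0,t-1)]\cdot\P(X^\beta(t)=j_\beta\mid X^\beta(0)=a_\beta)\cdot\E[W_{j_\beta}]$. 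Identifying $\E[TC^\alpha(0,t-1)]=V_{a_\alpha}(t-1)$, $\P(X^\beta(t)=j_\beta\mid X^\beta(0)=a_\beta)=\,^{[1:l]}P_{a_\beta,j_\beta}^{(t)}$ by~(\ref{ciserve}), and $\E[W_{j_\beta}]=\int_0^{+\infty}\overline{F}_{j_\beta}(x)\,dx$ as in Proposition~\ref{p2}, yields the second line of~(\ref{vab}); exchanging the roles of $\alpha$ and $\beta$ gives the third line. For the purely time-$t$ term $\E[S^\alpha(t)\,S^\beta(t)]$ I insert $\sum_{j_\alpha,j_\beta}\mathbbm{1}_{\{X^\alpha(t)=j_\alpha\}}\mathbbm{1}_{\{X^\beta(t)=j_\beta\}}$; by Assumption~\textbf{A1} the joint rating probability factors as $\,^{[1:l]}P_{a_\alpha,j_\alpha}^{(t)}\,^{[1:l]}P_{a_\beta,j_\beta}^{(t)}$, and by Assumption~\textbf{A4}, conditionally on $\{X^\alpha(t)=j_\alpha,\,X^\beta(t)=j_\beta\}$ the pair $(S^\alpha(t),S^\beta(t))$ is distributed as $(W_{j_\alpha},W_{j_\beta})$ with density $f^{(\alpha,\beta)}_{j_\alpha,j_\beta}(z_1,z_2)=c_\theta\big(F_{j_\alpha}(z_1),F_{j_\beta}(z_2)\big)f_{j_\alpha}(z_1)f_{j_\beta}(z_2)$ and independent of those two events, giving the fourth line of~(\ref{vab}). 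Summing the four contributions produces~(\ref{vab}).

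The step I expect to be the main obstacle is the factorization in the mixed terms: beyond Assumption~\textbf{A1} it rests on the (implicit) modelling hypothesis that, conditionally on the rating trajectories, the credit-spread vector at time $t$ is independent of the spreads at earlier times, so that $W_{j_\beta}$ at time $t$ is genuinely independent of the past functional $TC^\alpha(0,t-1)$. I would make this time-independence of the spread mechanism explicit and invoke it; everything else is bookkeeping — carrying the conditioning on $a_\alpha$ and $a_\beta$ through all four terms and reusing the identifications of $\,^{[1:l]}P^{(t)}$ and of $\E[W_j]=\int_0^{+\infty}\overline{F}_j(y)\,dy$ already established for Proposition~\ref{p2}.
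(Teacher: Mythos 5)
Your proposal is correct and follows essentially the same route as the paper's proof: split off the last time step in each factor of Definition~\ref{v_ab}, expand into four terms, identify the first with $V_{a_\alpha a_\beta}^{(\alpha,\beta)}(t-1)$, factor the two mixed terms into $V_{a_\cdot}(t-1)$ times $\sum_j{}^{[1:l]}P^{(t)}_{\cdot,j}\int_0^{+\infty}\overline{F}_j(y)\,dy$ using \textbf{A1} and \textbf{A3}, and evaluate the purely time-$t$ term via the copula density from \textbf{A4}. The only (welcome) difference is that you make explicit the conditional independence of the time-$t$ spreads from the earlier spread history, a hypothesis the paper's proof also relies on tacitly when it conditions on $\sigma\left(X^{(\alpha)}(z),S^{(\alpha)}(z),X^{(\beta)}(z),\,z\leq t-1\right)$ and then retains only $X^{(\beta)}(t-1)$.
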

\begin{proof}
Similarly to the in proof in section \ref{proof1}, we can compute the expected value recursively, so that 
\begin{equation}\label{vab1}
\begin{aligned}
V_{a_\alpha a_\beta}^{(\alpha, \beta)}(t) = & \E \bigg[ \bigg( \sum_{s=1}^{t-1} \sum_{j_\alpha=1}^D  \mathbbm{1}_{\{X^{(\alpha)}(s)=j_\alpha|X^{(\alpha)}(0) = a_\alpha\}}S^\alpha(s)  +  \sum_{j_\alpha=1}^D  \mathbbm{1}_{\{X^{(\alpha)}(t)=j_\alpha|X^{(\alpha)}(0) = a_\alpha\}}S^\alpha(t)\bigg) \\
 &\cdot \bigg( \sum_{s=1}^{t-1} \sum_{j_\beta=1}^D  \mathbbm{1}_{\{X^\beta(s)=j_\beta|X^\beta(0) = a_\beta\}}S^\beta(s) + \sum_{j_\beta=1}^D  \mathbbm{1}_{\{X^\beta(t)=j_\beta|X^\beta(0) = a_\beta\}}S^\beta(t)\bigg) \bigg].\\
\end{aligned}
\end{equation}
Multiplying member by member we obtain:
\begin{equation}\label{vab2}
\begin{aligned}
& \E \bigg[ \bigg( \sum_{s=1}^{t-1} \sum_{j_\alpha=1}^D  \mathbbm{1}_{\{X^{(\alpha)}(s)=j_\alpha|X^{(\alpha)}(0) = a_\alpha\}}S^\alpha(s) \bigg) \cdot \bigg( \sum_{s=1}^{t-1} \sum_{j_\beta=1}^D  \mathbbm{1}_{\{X^\beta(s)=j_\beta|X^\beta(0) = a_\beta\}}S^\beta(s) \bigg) \bigg]\\
&+ \, \E \bigg[ \bigg( \sum_{s=1}^{t-1} \sum_{j_\alpha=1}^D  \mathbbm{1}_{\{X^{(\alpha)}(s)=j_\alpha|X^{(\alpha)}(0) = a_\alpha\}}S^\alpha(s) \bigg) \cdot \bigg( \sum_{j_\beta=1}^D  \mathbbm{1}_{\{X^\beta(t)=j_\beta|X^\beta(0) = a_\beta\}}S^\beta(t)\bigg) \bigg]\\
&+  \,  \E \bigg[ \bigg(\sum_{s=1}^{t-1} \sum_{j_\beta=1}^D  \mathbbm{1}_{\{X^\beta(s)=j_\beta|X^\beta(0) = a_\beta\}}S^\beta(s)\bigg) \cdot \bigg(\sum_{j_\alpha=1}^D  \mathbbm{1}_{\{X^{(\alpha)}(t)=j_\alpha|X^{(\alpha)}(0) = a_\alpha\}}S^\alpha(t)\bigg)\bigg] \\
&+  \E \bigg[ \bigg(\sum_{j_\alpha=1}^D  \mathbbm{1}_{\{X^{(\alpha)}(t)=j_\alpha|X^{(\alpha)}(0) = a_\alpha\}}S^\alpha(t)\bigg) \cdot \bigg( \sum_{j_\beta=1}^D  \mathbbm{1}_{\{X^\beta(t)=j_\beta|X^\beta(0) = a_\beta\}}S^\beta(t)\bigg) \bigg]. 
\end{aligned}
\end{equation}
The first addendum of (\ref{vab2}), by Definition \ref{v_ab}, coincides with the expected value of the product of the total credit spread paid by countries $\alpha$ and $\beta$ at time $t - 1 $, i.e., $ V_{a_\alpha a_\beta}^{(\alpha, \beta)}(t-1)$. 
For the second addendum we have: 
\begin{equation}\footnotesize\label{vab3}
\begin{aligned}
& \E \left[\E\left[ \left( \sum_{s=1}^{t-1} \sum_{j_\alpha=1}^D  \mathbbm{1}_{\{X^{(\alpha)}(s)=j_\alpha|X^{(\alpha)}(0) = a_\alpha\}}S^\alpha(s) \right) \cdot \left( \sum_{j_\beta=1}^D  \mathbbm{1}_{\{X^\beta(t)=j_\beta|X^\beta(0) = a_\beta\}}S^\beta(t) \right) \Bigg{|} \sigma \left(X^{(\alpha)}(z), S^{(\alpha)}(z), X^{(\beta)}(z), z \leq t-1 \right)\right] \right]\\
& = \E\left[ \sum_{s=1}^{t-1} \sum_{j_\alpha=1}^D  \mathbbm{1}_{\{X^{(\alpha)}(s)=j_\alpha|X^{(\alpha)}(0) = a_\alpha\}}S^\alpha(s)\cdot \E\left[\sum_{j_\beta=1}^D  \mathbbm{1}_{\{X^\beta(t)=j_\beta|X^\beta(0) = a_\beta\}}S^\beta(t)   \Bigg{|}  \sigma \left(X^{(\alpha)}(z), R^{(\alpha)}(z), X^{(\beta)}(z), z \leq t-1 \right)     \right] \right].
\end{aligned}
\end{equation}
Now,
 under assumption \textbf{A1} (i.i.d. of $X^{(\alpha)}$ and $X^\beta$) we have 
\begin{equation}\footnotesize\label{vab4}
\begin{aligned}
\E\left[\sum_{j_\beta=1}^D  \mathbbm{1}_{\{X^\beta(t)=j_\beta|X^\beta(0) = a_\beta\}}S^\beta(t)   \Bigg{|}  \sigma \left(X^{(\alpha)}(z), S^{(\alpha)}(z), X^{(\beta)}(z), z \leq t-1 \right)     \right] = \sum_{j_\beta=1}^D \E\left[ \mathbbm{1}_{\{X^\beta(t)=j_\beta|X^\beta(0) = a_\beta\}}W_{j_\beta} \bigg{|} X^{(\beta)}(t-1)\right].
\end{aligned}
\end{equation}
Under assumption \textbf{A3} and because of the Markov property, (\ref{vab4}) can be rewritten as 
\begin{equation}
\sum_{j_\beta=1}^D  \E\left[  \mathbbm{1}_{\{X^\beta(t)=j_\beta|X^\beta(t-1)\}} \right] \cdot \E\left[W_{j_\beta}\right] = \sum_{j_\beta=1}^D \,^{[1:l]}P^{(t)}_{X^{(\beta)}(t-1), j_\beta}\cdot \int_0^{ + \infty} \overline{F}_{j_\beta}(y)dy.
\end{equation}
Thus, (\ref{vab3}) becomes:
\begin{equation}
\E\left[ \sum_{s=1}^{t-1} \sum_{j_\alpha=1}^D  \mathbbm{1}_{\{X^{(\alpha)}(s)=j_\alpha|X^{(\alpha)}(0) = a_\alpha\}}S^\alpha(s) \cdot \,^{[1:l]}P^{(t)}_{X^{(\beta)}(t-1), j_\beta}\cdot \int_0^{ + \infty} \overline{F}_{j_\beta}(y)dy \right],
\end{equation}
which, for \textbf{A1}, can be rewritten as 
\begin{equation}
\begin{aligned}
&\E\left[ \sum_{s=1}^{t-1} \sum_{j_\alpha=1}^D  \mathbbm{1}_{\{X^{(\alpha)}(s)=j_\alpha|X^{(\alpha)}(0) = a_\alpha\}}W_{X^{\alpha}(s)} \right] \cdot \E\left[ \,^{[1:l]}P^{(t)}_{X^{(\beta)}(t-1), j_\beta}\cdot \int_0^{ + \infty} \overline{F}_{j_\beta}(y)dy \right]\\
& =V_{a_\alpha}(t-1) \cdot \left[\sum_{j_\beta=1}^D \,^{[1:l]}P^{(t)}_{a_\beta, j_\beta} \cdot \int_0^{+ \infty} \overline{F}_{j_\beta}(y)dy \right].
\end{aligned}
\end{equation}
which gives us the second addendum of (\ref{vab2}).
The third addendum of  (\ref{vab2}) is  symmetric with respect to the second one, we only need to exchange $\alpha$ and $\beta$. Therefore:
\begin{equation}
V_{a_\beta}(t-1) \cdot \left[\sum_{j_\alpha=1}^D \,^{[1:l]}P^{(t)}_{a_\alpha, j_\alpha} \cdot \int_0^{+ \infty} \overline{F}_{j_\alpha}(y)dy \right].
\end{equation}
Regarding the last addendum of (\ref{vab2}) we have: 
\begin{equation}\small
\begin{aligned}
&\E \bigg[ \bigg(\sum_{j_\alpha=1}^D  \mathbbm{1}_{\{X^{(\alpha)}(t)=j_\alpha|X^{(\alpha)}(0) = a_\alpha\}}W_{X^{(\alpha)}(t)}\bigg) \cdot \bigg( \sum_{j_\beta=1}^D  \mathbbm{1}_{\{X^\beta(t)=j_\beta|X^\beta(0) = a_\beta\}}W_{X^\beta(t)}\bigg) \bigg]\\
&= \E \left[\E\left[ \left(\sum_{j_\alpha=1}^D  \mathbbm{1}_{\{X^{(\alpha)}(t)=j_\alpha|X^{(\alpha)}(0) = a_\alpha\}}W_{X^{(\alpha)}(t)}\bigg) \cdot \bigg( \sum_{j_\beta=1}^D  \mathbbm{1}_{\{X^\beta(t)=j_\beta|X^\beta(0) = a_\beta\}}W_{X^\beta(t)}\right) \bigg{|} X^{(\alpha)}(t), X^{(\beta)}(t)\right] \right].\\
\end{aligned}
\end{equation}
Under assumption \textbf{A4}, 
from 
\begin{equation}\small
= \E\left[ \left(\sum_{j_\alpha=1}^D  \mathbbm{1}_{\{X^{(\alpha)}(t)=j_\alpha|X^{(\alpha)}(0) = a_\alpha\}}W_{X^{(\alpha)}(t)}\bigg) \cdot \bigg( \sum_{j_\beta=1}^D  \mathbbm{1}_{\{X^\beta(t)=j_\beta|X^\beta(0) = a_\beta\}}W_{X^\beta(t)}\right) \bigg{|} X^{(\alpha)}(t), X^{(\beta)}(t)\right],\\
\end{equation}
we obtain 
\begin{equation}
\begin{aligned}
&\sum_{j_\alpha=1}^D  \sum_{j_\beta=1}^D \mathbbm{1}_{\{X^{(\alpha)}(t)=j_\alpha|X^{(\alpha)}(0) = a_\alpha\}} \mathbbm{1}_{\{X^{(\beta)}(t)=j_\beta|X^\beta(0) = a_\beta\}}\cdot \E\left[W_{X^{(\alpha)}(t)} \cdot W_{X^{(\beta)}(t)} \right]\\
&= \sum_{j_\alpha=1}^D  \sum_{j_\beta=1}^D \mathbbm{1}_{\{X^{(\alpha)}(t)=j_\alpha, X^{(\beta)}(t) = j_\beta\}}\cdot \int_{0}^{+ \infty}\int_{0}^{+ \infty} f_{X^{(\alpha)}(t), X^{(\beta)}(t)}^{(\alpha, \beta)}(z_1, z_2) \cdot z_1\cdot z_2 dz_1 dz_2,
\end{aligned}
\end{equation}
where 
\begin{equation}
f_{X^{(\alpha)}(t), X^{(\beta)}(t)}^{(\alpha, \beta)}(z_1, z_2) = \frac{\partial^2F_{X^{(\alpha)}(t), X^{(\beta)}(t)}(z_1, z_2)}{\partial z_1 \partial z_2}.
\end{equation}
Thus, the last addendum of (\ref{vab2}) will be given by 
\begin{equation}
\E\left[\sum_{j_\alpha=1}^D  \sum_{j_\beta=1}^D \mathbbm{1}_{\{X^{(\alpha)}(t)=j_\alpha, X^{(\beta)}(t) = j_\beta\}} \cdot \int_{0}^{+ \infty}\int_{0}^{+ \infty} f_{X^{(\alpha)}(t), X^{(\beta)}(t)}^{(\alpha, \beta)}(z_1, z_2) \cdot z_1\cdot z_2 dz_1 dz_2 \right],
\end{equation}
which leads to 
\begin{equation}
= \sum_{j_\alpha=1}^D  \sum_{j_\beta=1}^D  \,^{[1:l]}P_{a_\alpha, j_\alpha}^{(t)} \,^{[1:l]}P_{a_\beta, j_\beta}^{(t)}\cdot \int_0^{+\infty} \int_0^{+\infty} z_1\cdot z_2 \cdot f^{(\alpha, \beta)}_{j_\alpha, j_\beta}(z_1,z_2) dz_1dz_2.
\end{equation}
\end{proof}

\begin{remark}
 Propositions \ref{p2}-\ref{p3} allow to compute the covariance between the total credit spread paid by  two countries in the following way: 
\begin{equation}
\sigma^{(\alpha, \beta)}_{j_\alpha, j_\beta}(t)= Cov\left(TC^{(\alpha)}(0,t), TC^{(\beta)}(0,t) \right) = V_{a_\alpha a_\beta}^{(\alpha, \beta)}(t)  -  V_{a_\alpha}(t) \cdot  V_{a_\beta}(t) .
\end{equation}

\end{remark}

\section{Results and Discussion}
\label{results}
In this section we show the empirical results we have obtained by applying the whole methodology to the data presented in Section \ref{data:analysis}. After a brief introduction of the change-point detection algorithm, we move to the description and interpretation of the results. In particular, for seek of synthesis, not all results are shown for all agencies. However, the results are available under request.

\subsection{Estimation of the piecewise homogeneous Markov chain (PHMC) and the marginal distributions of the credit spread}
\label{parameter:estimation}
The application starts from the detection of any breaks in the rating dynamics. This is done in order to make the simulations more precise and accurate. In fact, if there are some change-points, it is convenient using the dynamics for the rating process from the last change-point onward.
We rely on  change-point detection theory. In particular we use the off-line detection algorithm proprosed by Polansky \cite{detect}. 

Given that we suppose the existence of some change-points but their position and their number are unknown, we have to proceed as follows. We start by supposing that there is one and more change-points; we apply the  likelihood theory in order to detect the exact position of the change-points. Finally, we find the optimal number of change points (whose position is previously found) by means of the Bayesian information criterion (BIC). The BIC allows to choose the best model to fit the observed data, by balancing the goodness of fit of the model  with the number of parameters required.

Therefore, when the change-points  $ \tau_1, \dots, \tau_k$ are unknown parameters but the number $k$ of change-points is known, the 
%it suffices to add the unknown parameters $\hat{\tau}_1, \dots , \hat{\tau}_k$ to the likelihood function. In particular, 
the maximum likelihood estimator (MLE) of the change-points, i.e. ,   $\hat{\tau}_1, \dots , \hat{\tau}_k$, is estimated by maximizing the likelihood function. 
\begin{equation}
(\hat{\tau}_1, \dots , \hat{\tau}_k)  = \underset{\tau_1<\dots < \tau_k \in \{1, \dots , n-1  \}}{\arg \max} \Big{\{} \sum_{m=0}^k L(\tau_m, \tau_{m+1}) \Big{\}},
\end{equation}
where $\sum_{m=0}^k L(\tau_m, \tau_{m+1}) $ is the  observed likelihood function conditional on the changing points (details about the computation of the likelihood function can be found in \cite{detect}).
To understand if the breaks we have found result on abrupt changes in the rating process we test 
$H_0 : \,^{(0)}\textbf{P} = \dots = \,^{(k)}\textbf{P}$ against $H_1 : \,^{(0)}\textbf{P} \neq \dots \neq \,^{(k)}\textbf{P}$.

The test statistic $\Lambda$ applied to the sequences of observations $x^c(1) \dots , x^c(s)$, $\forall c \in \mathcal{C}$, is computed as follows: 
\begin{equation}
\Lambda = \sum_{m=1}^k \sum_{c \in \mathcal{C}} \sum_{i=1}^s \frac{p_{ij}^{(m)}(x_i^c, x_{i+1}^c)}{p_{ij}^{(m-1)}(x_i^c, x_{i+1}^c)},
\end{equation}
which  holds to the following Equation:
\begin{equation}\label{test}
\Lambda= 2 \bigg{(}\sum_{m=1}^k L(\tau_m,\tau_{m+1}) - L(\tau_0, \tau_{k+1})\bigg{)}.
\end{equation}
When the change-points are known, $\Lambda$ asymptotically tends to a $\chi^2$ distribution with $D(D-1)k$ degrees of freedom. On the contrary, when the change-points are unknown,
the critical value of the test is derived using the bootstrap simulation (for further details on this techinque see \cite{detect}). The null hypothesis, with significance level $\alpha$, is rejected if $\Lambda > \Lambda_{|1-\alpha|}$.

Lastly, a model selection is required in order to avoid problem of over-fitting due to a great amount of parameters. Thus, the BIC
allows to choose the best model to fit the observed data, by balancing the goodness of fit with the number of parameters. It is defined as:  
\begin{equation}\label{bic}
BIC(k) = \log(s) \cdot D(D-1)\cdot (k+1)  - 2   \sum_{m=0}^k L \big{(}\hat{\tau}_m, \hat{\tau}_{m+1}\big{)},
\end{equation}
where the first term on the right side of  (\ref{bic}) denotes the number of unknown parameters and the second one the goodness of fit of the model. The optimal number of change-points is detected by minimizing the BIC  previously computed for $k=0, 1, ...K$, with $K$ is arbitrarily chosen.

The change-point detection algorithm is applied to the observed rating data for the three agencies. It is worth noting that, as we are working on a daily scale, the observations are $128\,976$ for each agency. Find out the position of the change-points when $k=2, 3$ creates computational issues, mostly related to the huge amount of time required.
To overcome this problem, we firstly detect the breaks on a monthly scale. Then, we use the results to build a range of time, on a daily scale, within whom the algorithm is carried out. Furthermore, the algorithm is embarrassingly parallel allowing for more speed without loosing informations over the data.
\begin{table}[!ht]
\centering
\begin{tabular}{lllll}
\hline
 & S\&P &  &  &  \\
\hline
$k$ &0&	1&2&3\\
L&-542.61&	-483.51&	-447.98&	-427.46\\
Parameters& 15&	24&	36&48\\
BIC & 1214.1&\textbf{1070.1}&1205.2&	1267.2\\
%\hline
%$\tau$ &&2012-01-12&2008-10-24 \& 2012-01-12&2004-12-10 \& 2008-10-24 \\
%\hline
%&&&&2012-01-12\\
\hline
&Moody's & &  & \\
\hline
$k$ &0&	1	&2&	3\\
L&	-470.70&415.06&-391.33&	-361.76\\
Parameters&16&	32&	42&	36\\
BIC &	1078.83&	1104.98 &	          1143.42&\textbf{1032.74}\\
%\hline
%$\tau$ & &2002-11-11& 2002-11-11 \& 2009-03-30 & 2002-11-11 \& 2009-03-30 \\
%\hline
%&&&&2013-04-29\\
%\hline
\hline
&Fitch&  &&\\
\hline
$k$ &0&	1	&2&	3\\
%\hline
L&	-530.02&	-496.57&	-469.41&	-472.59\\
%\hline
Parameters&14&	24&	30&	40\\
BIC &	\textbf{1180.30}&	1199.29 &	1196.51&	1288.75\\
%\hline
%$\tau$ & &2004-07-06&2004-07-06 \& 2012-01-26& 2004-07-06 \& 2009-04-07  \\
%\hline
%&&&&2012-01-26 \\
\hline
\end{tabular}
\caption{Results from change-points detection algorithm}\label{b.i.c}
\end{table}

Table \ref{b.i.c} gives results about the number of change-points ($k$), the maximum  likelihood function (L), the number of parameters\footnote{In our application, the number of parameters is equal to  $|G|$
 where $G = \{P_{ij} : P_{ij}>0, i \neq j \}$} and the value of the  Bayesian information criterion (BIC). 
According to our results, the best model for S\&P is that including one change-point, detected on January $12$, $2012$. For Fitch there are no change-points. For Moody's the best model includes three change-points: November 11, 2002; March 30, 2009 and April 29, 2013. These findings are supported by the results of  statistical test $\Lambda$ at a significance level of 0.05 as shown in Table (\ref{ciserve2})
\begin{table}[!ht]
\centering
\begin{tabular}{c|ccc}
&$\Lambda_{0.95}$&  $\Lambda$& p-value\\
\hline
S\&P & 262.659& 447.228&0.0099\\
Moody's & 8.369&217.878&0.0049\\
\end{tabular}
\caption{Results of the statistical test: $\Lambda_{0.95}$ stems from the bootstrap simulation; $\Lambda$ is the statistic computed on the observed data.}\label{ciserve2}
\end{table}
To compute the expected inequality and the other measures we proposed in Section \ref{theil:index}, the transition probability matrix is estimated using data from the last changing point onward. In particular, for Fitch we use the whole observed rating trajectories because no change-points was detected.
\begin{table}[!h]\footnotesize
\begin{tabular}{c|cccccccc}
&$1$&$2$&$3$&$4$&$5$&$6$&$7$&$8$\\
\hline
$1$&$0.99948$    & $5.15e-04$  &	$0$      &	  $0$    &	$0$      &	$0$       &	$0$       &	$0$
\\
$2$&$9.71e-05$   &   $0.99971$ &  $1.94e-04$ &	  $0$    &	$0$      &	$0$       &	$0$       &	$0$
\\
$3$&$0$          &	  $0$      &  $0.99954$  &$4.57e-04$ &	$0$      &	$0$	      & $0$       &	$0$
\\
$4$&   $0$       &    $0$      & $3.77e-04$  & $0.99934$ & $2.83e-04$&	$0$       &	$0$       &	$0$
\\
$5$&   $0$       &	  $0$      &	$0$      & $6.02e-04$&  $0.9994$ &	$0$       &	$0$       &	$0$
\\
$6$&   $0$	     &    $0$      &	$0$      &	  $0$    &   $0$     & $0.99931$  & $6.92e-04$&	$0$
\\
$7$&   $0$       &    $0$	   &    $0$      &	  $0$    &	 $0$     & $0.00208$  & $0.99375$ & $0.00417$	
\\
$8$&   $0$       &	  $0$      &	$0$      &	  $0$    &	 $0$     & $0.01333$  & $0.01333$ & $0.97333$
\\
\end{tabular}
\caption{Transition probability matrix after the change-point ($2012-01-12$)}\label{psep}
\end{table}
Table (\ref{psep}) represents the transition probability matrix resulting from the S\&P data related to the last sub-period. The probability of maintaining the current state is very high and it is decreasing as the credit quality goes down. Furthermore, for rating classes A and B ( i.e. $g_i \in [3,6]$) there is no probability of upgrade. 

We compute the Jafry Schuermann distance $d_{JS}$ (see \cite{JS}) in order to compare the transition probability matrix of the PHMC with that of the traditional HMC.
The index proposed in Jafry and Schuermann  measures the average probability of transition by assessing the mean of all singular values $\lambda_i $ of the product $\tilde{\textbf{P}}' \tilde{\textbf{P}}$, where  $\tilde{\textbf{P}}'$  is the transpose of the mobility matrix  $\tilde{P}$. In particular, $\tilde{\textbf{P}}=\textbf{P} - \textbf{I}$, where $\textbf{I}$ is the identity matrix with the same dimension as the original transition probability matrix $\textbf{P}$:
\begin{equation}\label{svd}
M_{SVD}(\textbf{P})= \frac{\sum_{i=1}^N \sqrt{\lambda_i (\tilde{\textbf{P}'}\tilde{\textbf{P}})}}{N},
\end{equation}
This metric  can be interpreted as a proxy of the average transition probability of a given migration matrix.  

Then, the difference is calculated by comparing the average singular value (obtained in Formula (\ref{svd})) of the three matrices:
\begin{equation}
D_{JS}= M_{SVD}(^{(k)}\textbf{P}) - M_{SVD}(\textbf{P}).
\end{equation}
where $^{(k)}\textbf{P}$ denotes the transition probability matrix of the PHMC and $\textbf{P}$ denotes that of the HMC. 
This metric is very close to zero ($d_{JS}=1.42e-04$ for S\&P and $d_{JS}=3.63e-04$ for Moody's). Thus, there is almost the same average transition probability in both cases, although the test exposed above said us that the PHMC are significantly different.
This measure is also applied to compare the matrices of all agencies, we observe that the value of  the distance ranges between $0.039$ and $0.044$ as described in the  Table (\ref{ciserve3}):
\begin{table}[!ht]
\centering
\begin{tabular}{c|ccc}
&Fitch&Moody's& S\&P	\\
\hline
Fitch&	-& 0.043805&	0.039829\\
Moody's& -0.043805	&-&	-0.0039757\\
S\&P &-0.039829&	0.0039757&-\\
\end{tabular}
\caption{$d_{JS}$ applied to the transition matrices of all agencies.}\label{ciserve3}	
\end{table}

\begin{figure}[!ht]
\begin{minipage}{0.45\textwidth}
\includegraphics[width=0.95\textwidth]{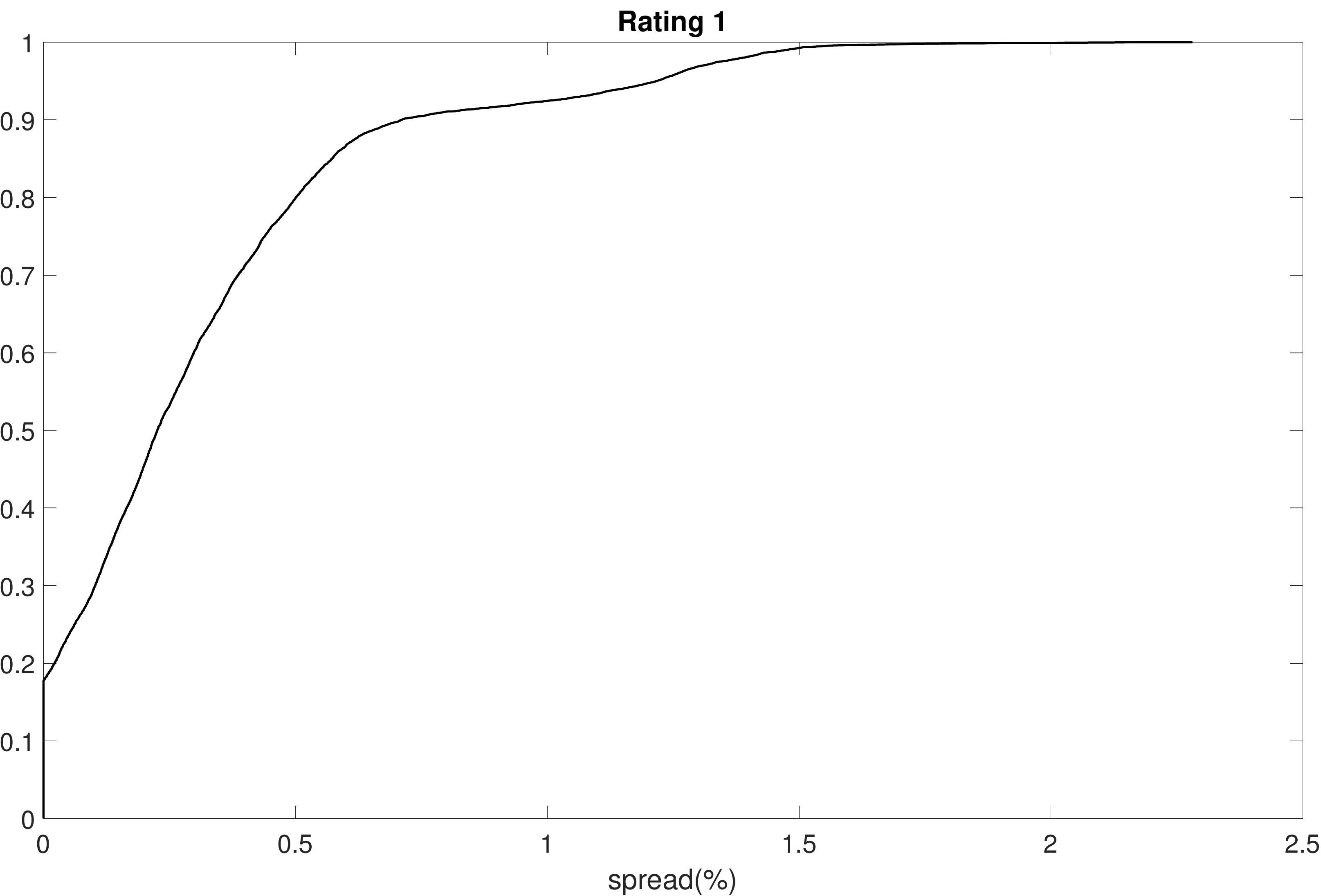}
\end{minipage}
\begin{minipage}{0.45\textwidth}
\includegraphics[width=0.95\textwidth]{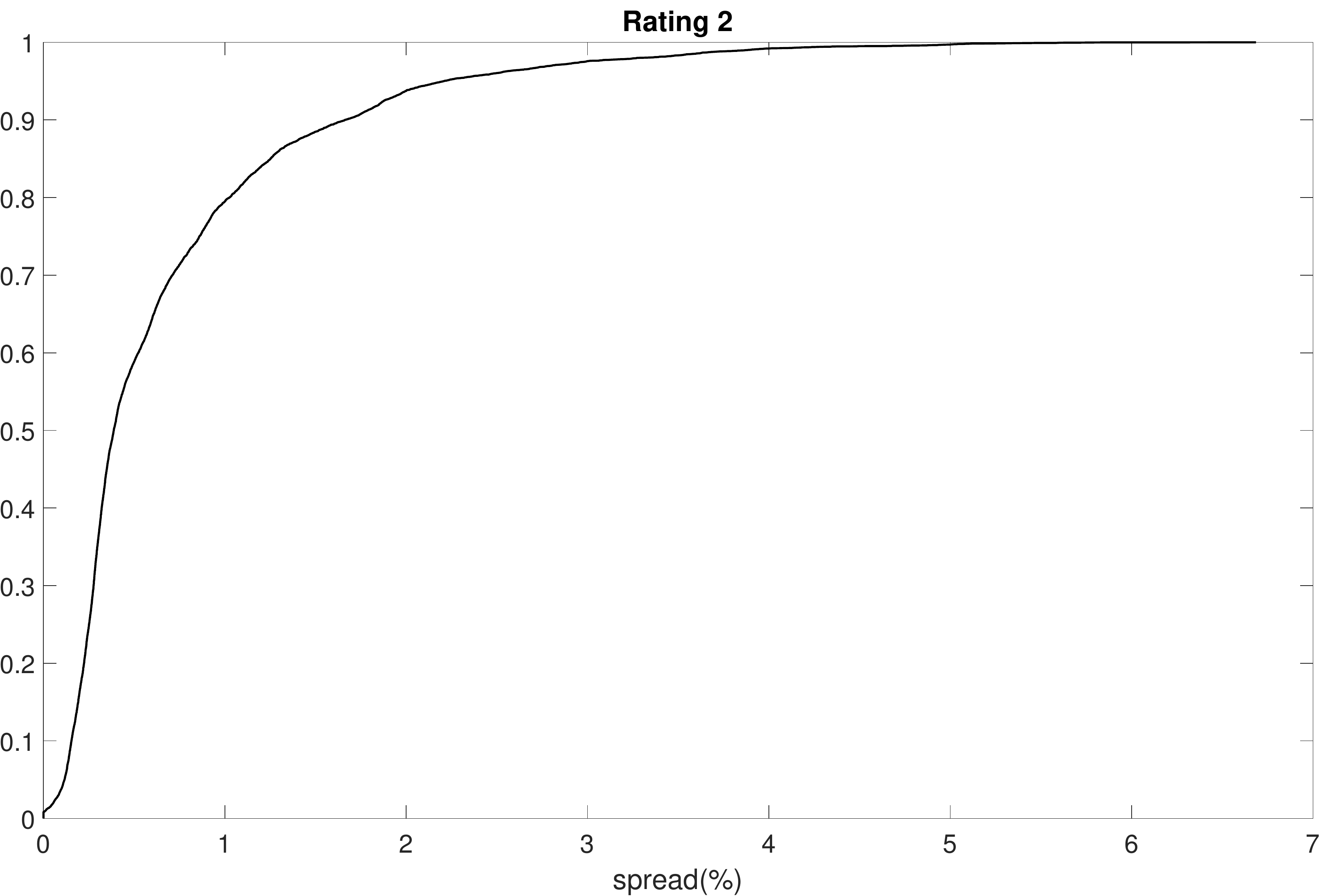}
\end{minipage}\\
\begin{minipage}{0.45\textwidth}
\includegraphics[width=0.95\textwidth]{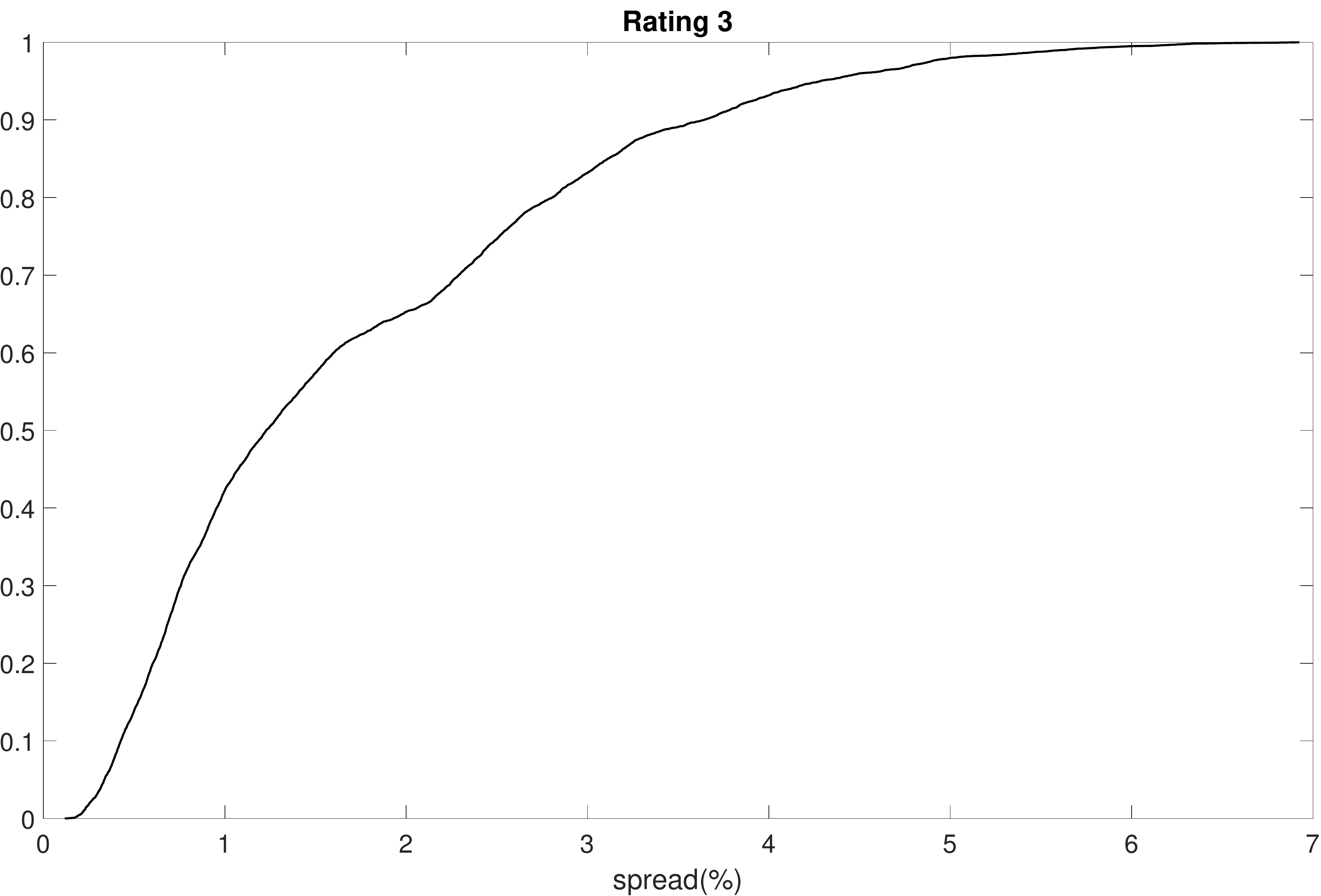}
\end{minipage}
\begin{minipage}{0.45\textwidth}
\includegraphics[width=0.95\textwidth]{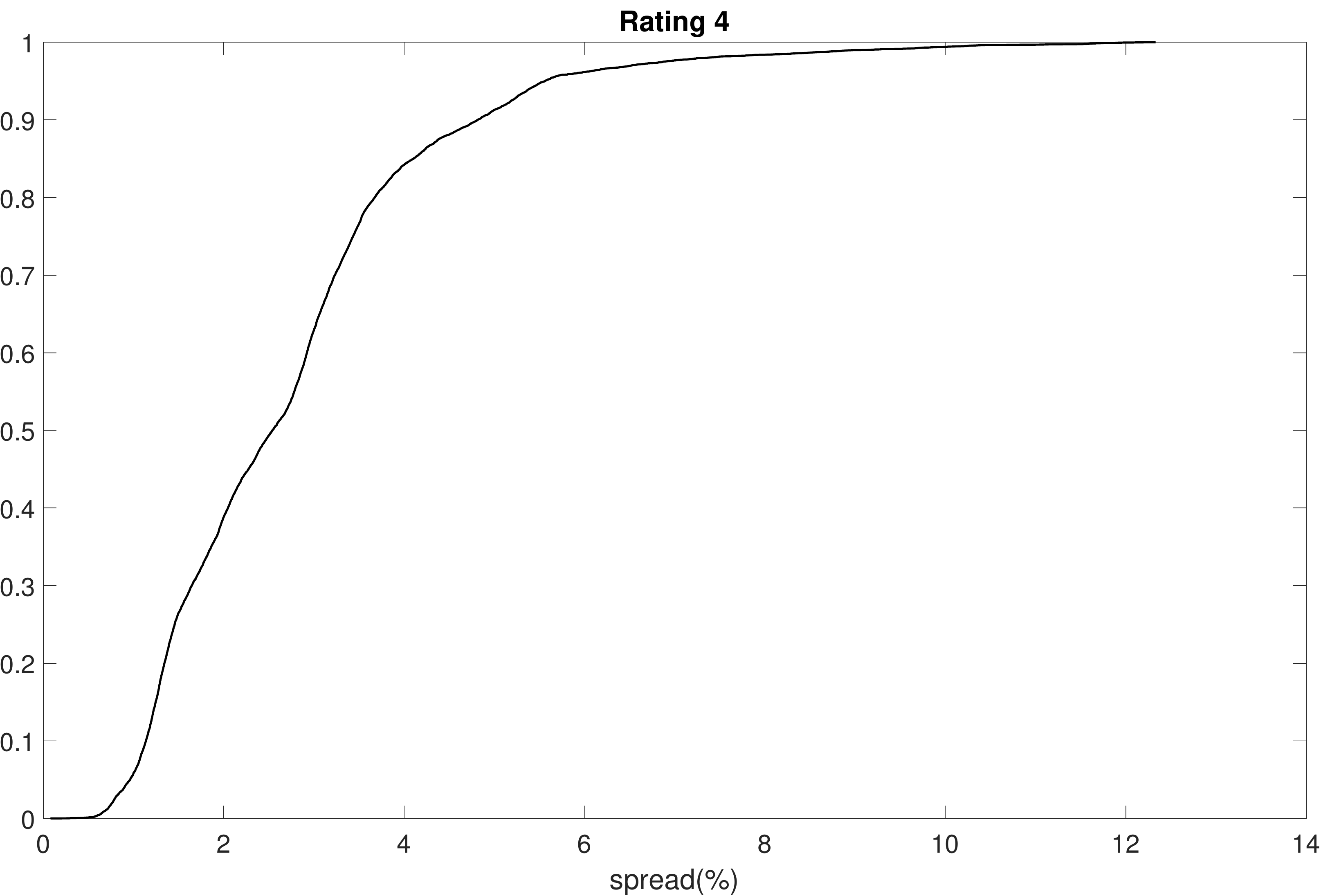}
\end{minipage}\\
\begin{minipage}{0.45\textwidth}
\includegraphics[width=0.95\textwidth]{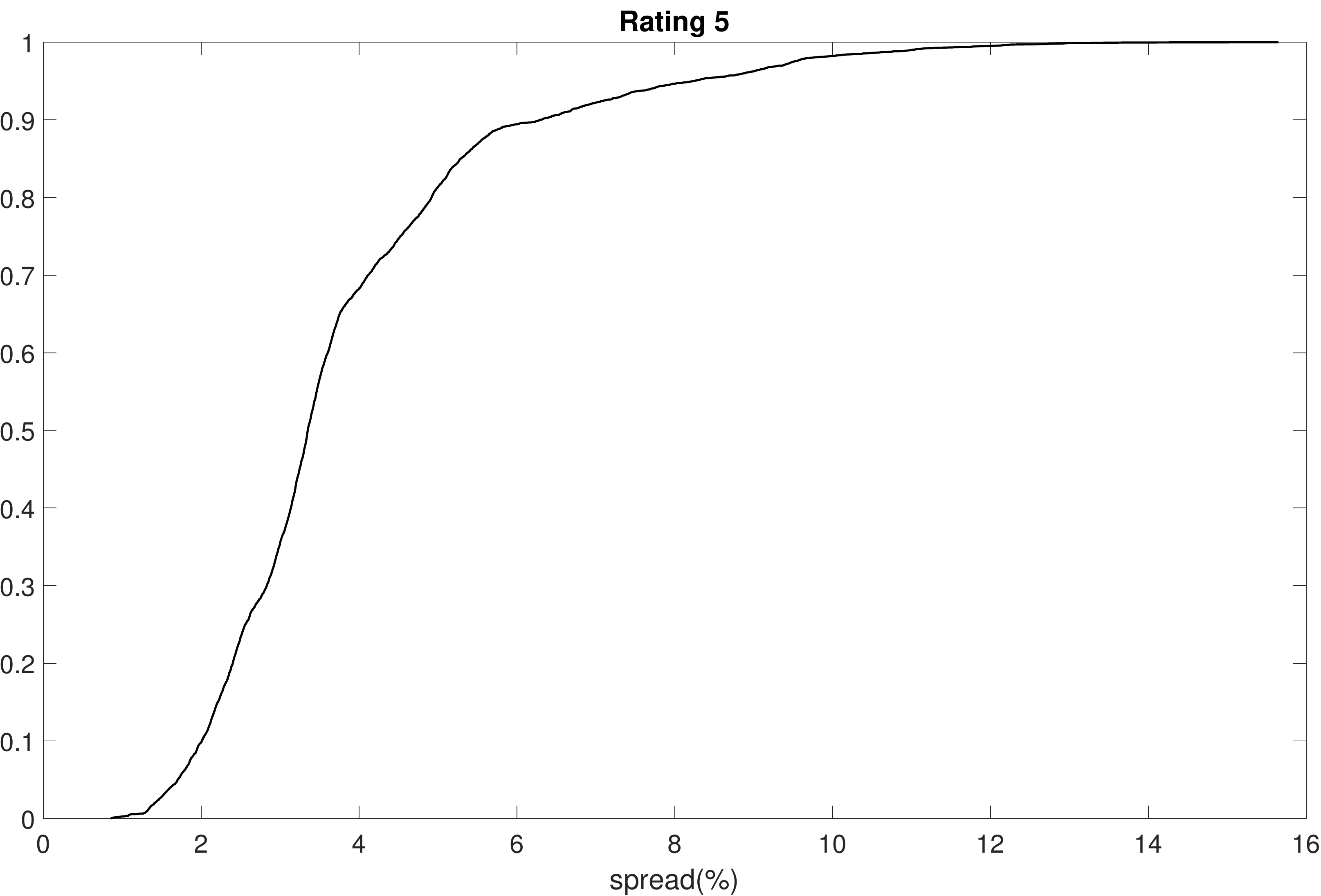}
\end{minipage}
\begin{minipage}{0.45\textwidth}
\includegraphics[width=0.95\textwidth]{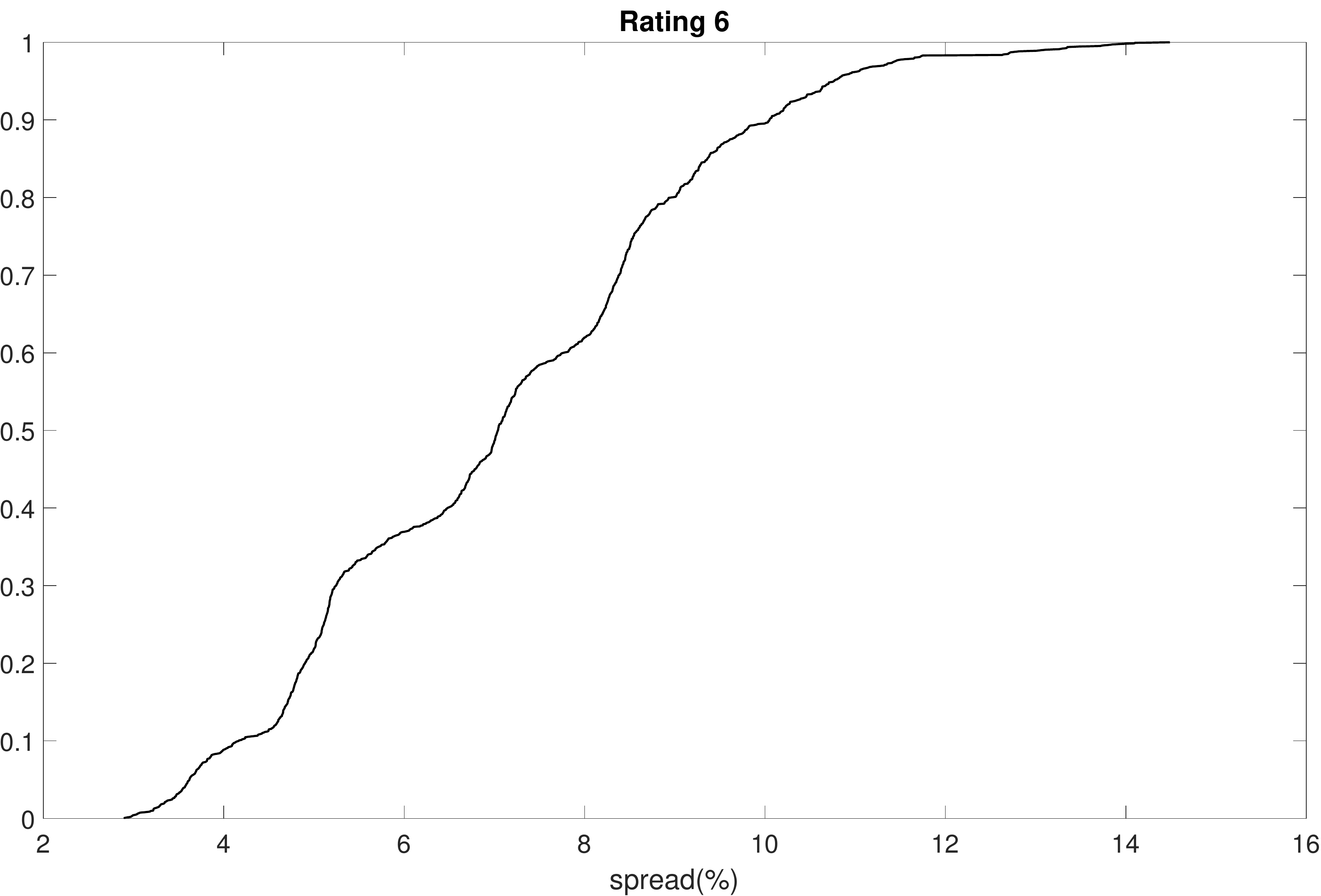}
\end{minipage}\\
\begin{minipage}{0.45\textwidth}
\includegraphics[width=0.95\textwidth]{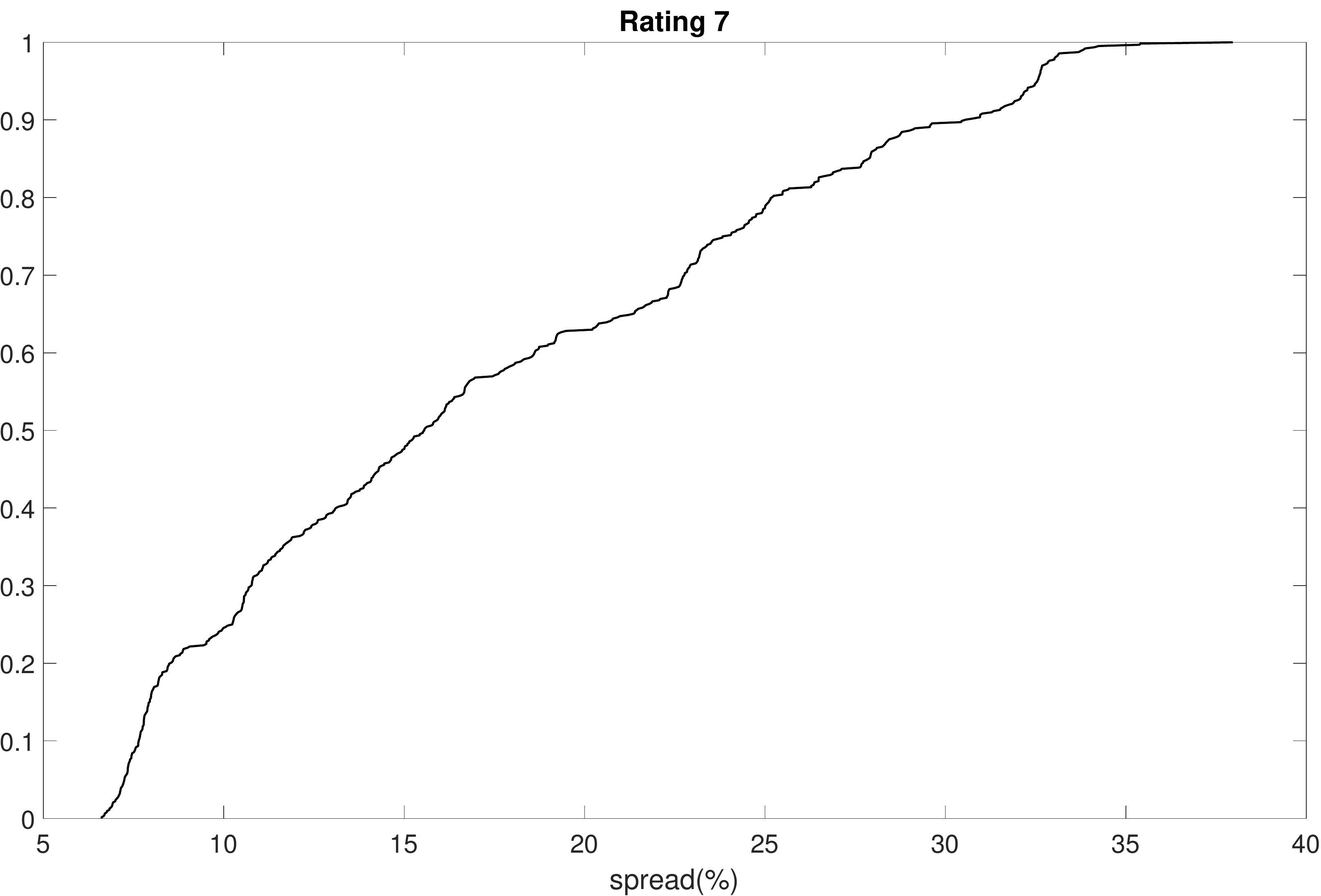}
\end{minipage}
\caption{Empirical distributions for rating class with S\&P data.}\label{cdf}
\end{figure}
Regarding the credit spread, to compute the multivariate distribution we make use of a Copula applied to the  distribution of $N$ countries' credit spread. In particular, according to Assumption \textbf{A3}, the marginal distributions we need are those related to the rating classes. Figure (\ref{cdf}) represents the empirical credit spread distribution for all rating classes, resulting from S\&P data. The credit spread are expressed in basis point values. Each sub-figure shows the empirical c.d.f. starting from rank=1 to rank=7. 
Moreover, some descriptive statistics related to those distributions are shown in Table (\ref{descriptive:statistics}). 
\begin{table}[!ht]
\centering
\begin{tabular}{|ccccccccc|}
\hline
rank& 1&2&3&4&5&6&7&8\\
\hline
sample& 13272	&11896&	11221&	12784&	7376&	1472	&632	&75\\
\hline
Mean& 0,321	&0,696&	1,700&	2,750&	3,834&	7,053&	17,356&	21,029\\
\hline
st.dev.&  0.350 &  0.772 &    1.298&    1.635&    1.954&    2.254&    8.344&    7.451\\
\hline
Skew & 1.743&    2.651&    1.120&    1.731&    1.805&    0.341&   0.466&    1.169\\
\hline
Kurt & 6.056&   11.882&    3.702&    7.914&    6.982&    2.656&    1.969&    3.199\\
\hline
\end{tabular}
\caption{Descriptive statistic of credit spread distribution with S\&P data.}\label{descriptive:statistics}
\end{table}
Not surprisingly, the average value increases as the credit quality gets worst (i.e. from rank=1 to rank=8). Higher risk perceived corresponds to a higher cost of debt and this cost is more variable as the credit quality decreases. Furthermore, the number of observations for the investment rating classes (i.e. rank=1, ..., 4) are larger with respect to the speculative rating classes (rank = 5, ... ,8).

The spread distributions are tested in order to understand if the use of different distributions for the rating classes is justified by a significant dissimilarity between them. The results of the Anova test confirmed that the spread distributions are different,  at a significance level of 0.05.
\begin{table}[!ht]
\centering
\begin{tabular}{c|ccc}
&Fitch&Moody's& S\&P	\\
\hline
F&11934,41&11725.85&17554.65\\
p-value& 0.00&0.00&0.00\\
\end{tabular}
\caption{Anova test within all rating agencies.}
\end{table}

\subsection{Assessing the financial inequality and the total credit spread.}
\label{financial:indicators}
The financial risk inequality is assessed  by means of the expected value of the dynamic Theil entropy as shown in Section \ref{dynamic:inequality}. This requires the set of all possible configurations stemming from 24 countries and 8 rating class (2 629 575). To avoid computational problems due to the huge amount of combinations, the Monte Carlo is carried out starting from modelling rating trajectories according to  $\,^{(k)}P_{ij}$ for each rating agency. In particular, by observing the initial configuration of countries, if a given country $c$ visited the state $i$ at time $t$ we take the c.d.f. of the probability distribution for this state (i.e. $p_{i,\cdot}$). After that, a pseudo-random number $u\in [0,1]$ is generated such that, if 
\begin{equation*}
\sum_{j=1}^k \,p_{i,j}\leq u(t) < \sum_{j=1}^{k+1}\, p_{i,j}, \hspace{0.5cm} \forall i,k \in E,
\end{equation*}
the next rating class visited by country $c$ at time $t+1$ would be equal to $k$. This is done for a horizon period of three years starting from the allocation of countries observed at the end of our datasets. Then, the spread dynamics are also simulated, according to the rating simulated at each time $t$, by extracting the spread from a multivariate distribution. This is done by applying a Gaussian Copula to the empirical distribution of the rating classes shown in the previous section. 
In order to better simulate the future payments we estimate the relative variation of the observed credit spread paid by each country. 
Once the spread dynamics are generated, the inequality $\D$ is computed for each Monte Carlo iteration. This process has been performed for $200$ iterations.  
 Finally, the first  order moment, i.e. $\E\left[T\D\right]$  is computed through the mean of the inequalities over all iterations. 
Figure (\ref{edttc}) shows the evolution of the expected dynamic Theil entropy along with the variation of the Expected total credit spread, i.e. $\Delta V(t)= V(t)-V(t-1)$. The absolute variation of the total credit spread is expressed in percentage value. The initial and ending values are represented by a circle and a rectangle, respectively. 
\begin{figure}[!ht]
\centering
\includegraphics[width=0.7\textwidth]{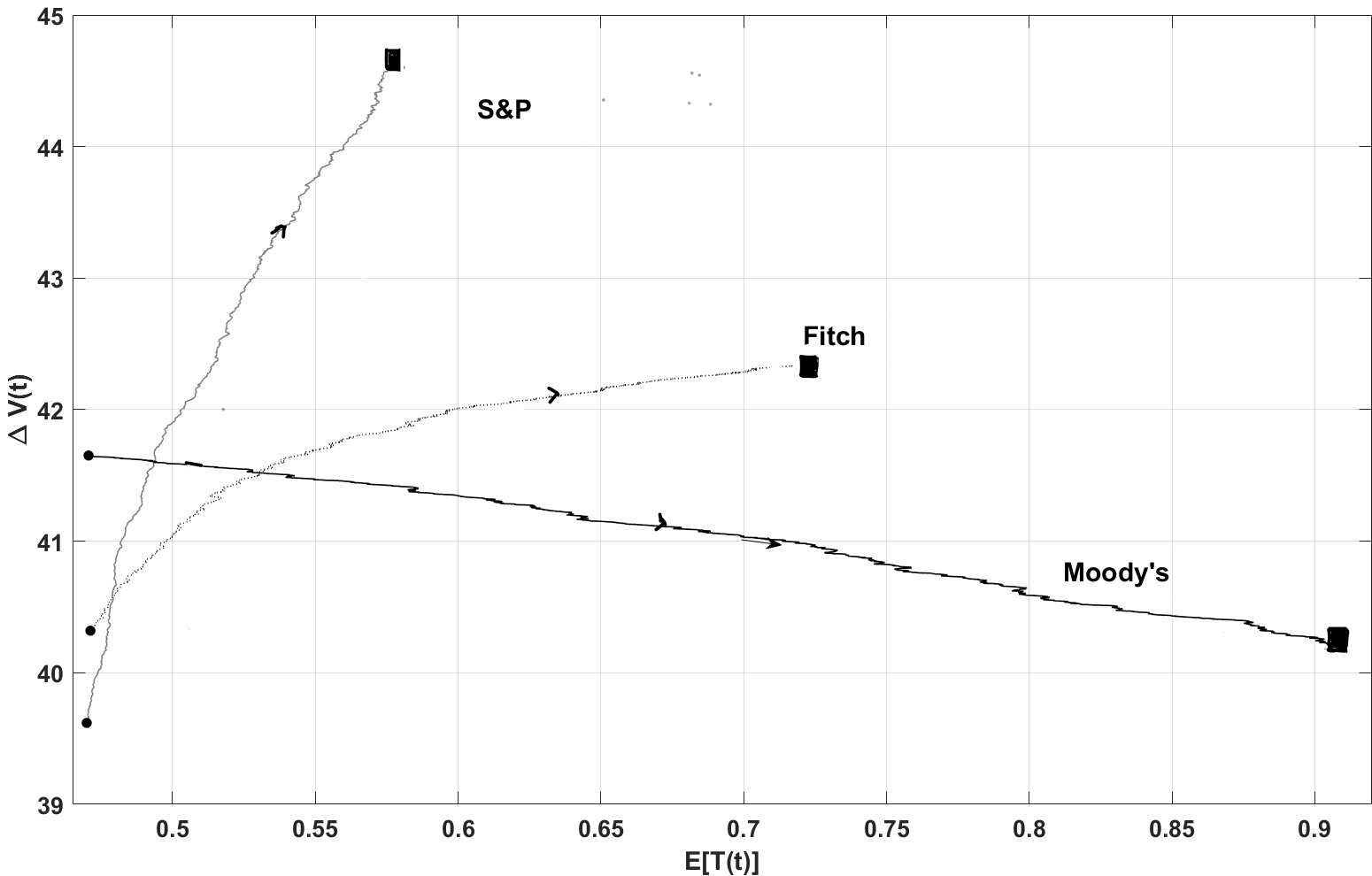}
\caption{Expected inequality and absolute variation of the the total credit spread for all agencies}\label{edttc}
\end{figure}

The expected inequality shows an upward trend for all three agencies. Starting from the same value, $0.47$ (abscissas of circle points), the expected inequality evolves differently among the rating agencies.
In the case 
of Standard \& Poor's it  increases of about the $23\%$ of the initial value ending with value equal to $0.578$ (abscissa of rectangle point of S\&P curve). For Fitch the inequality goes up until $0.721$ (abscissa of rectangle point of Fitch curve). The highest one is that computed for Moody's: the expected inequality, in fact, grows more than the other agencies and, at end of the horizon time, is $0.91$ (abscissa of rectangle point of Moody's curve). These results suggests that, in all cases, the financial risk  would be less equi-distributed after three years. 

On the ordinate axis of the Figure (\ref{edttc}) the variation of the expected total credit spread is shown. It expresses  the speed of growth of the expected total credit spread over time. 
$\Delta V(t)_{S\&P}$ increases over the period ranging between 39.625 (ordinate of circle point of S\&P curve) and 44.638 (ordinate of rectangle point of S\&P curve). In a similar way $\Delta V(t)_{Fitch}$ increases but ranging between 40.331 and 42.340. On the contrary, $\Delta V(t)_{Moody's}$ slowly decreases over time starting from 41.646 up to 40.205. 

The expected total credit spread helps us to quantify the size of the financial risk and to better interpret the evolution of the financial risk. Table (\ref{tabletc}) shows the values of the total credit spread at the beginning or the simulation, after the first two years and at the ed of our simulations. The value are expressed in percentage and for the three agencies.  
\begin{table}[!ht]
\centering
\begin{tabular}{|r|cccc|}
\hline
&          $t=0$       &     $1^{st}$ year   &     $2^{nd}$ year   &    $3^{rd}$ year\\
\hline
Moody's&        $41.648$     &     $15106.994$     &     $30032.948$     &    $44789.140$  \\ 
\hline
Fitch&          $40.325$     &     $14994.514$     &     $30272.184$     &    $45677.702$  \\
\hline
S\&P &          $39.615$     &     $14995.465$     &     $30659.464$     &    $46759.857$  \\ 
\hline 
\end{tabular}
\caption{$V(t)=\E[TC(0,t)]$ for all agencies.}\label{tabletc}
\end{table}
By looking at the Table (\ref{tabletc}), it is evident that the evolution of this indicator is different among rating agencies. 
After the first year Moody's agency  has the highest total payment, followed by S\&P. On the contrary, starting from the second year the total credit spread computed for Moody's  has the smallest value followed by Fitch. 

This can be summarize as follows. In the scenario of S\&P there would be more equi-distribution than the other agencies. However, the amount of the financial risk is higher and it increases faster than the other agencies.Furthermore, with Fitch data $\E\left[\D\right]$ increases more than S\&P but the total credit spread increases at lower rates. These results suggest that also in this case there are some countries paying more than the others but the difference among countries is higher even if the total risk is smaller. Finally, in the scenario simulated using Moody's data $\E\left[\D\right]$   goes up of almost the $90\%$ of its observed value, but the total credit spread increases at decreasing rates meaning that the increasing spread is paid by a subgroup of countries that is smaller than in the other two scenarios. 

The property of decomposition of  the dynamic Theil index allows to investigate the influence of the rating dynamics on the financial risk inequality assessment. 
\begin{figure}[!ht]
\centering
\includegraphics[width=1.1\textwidth]{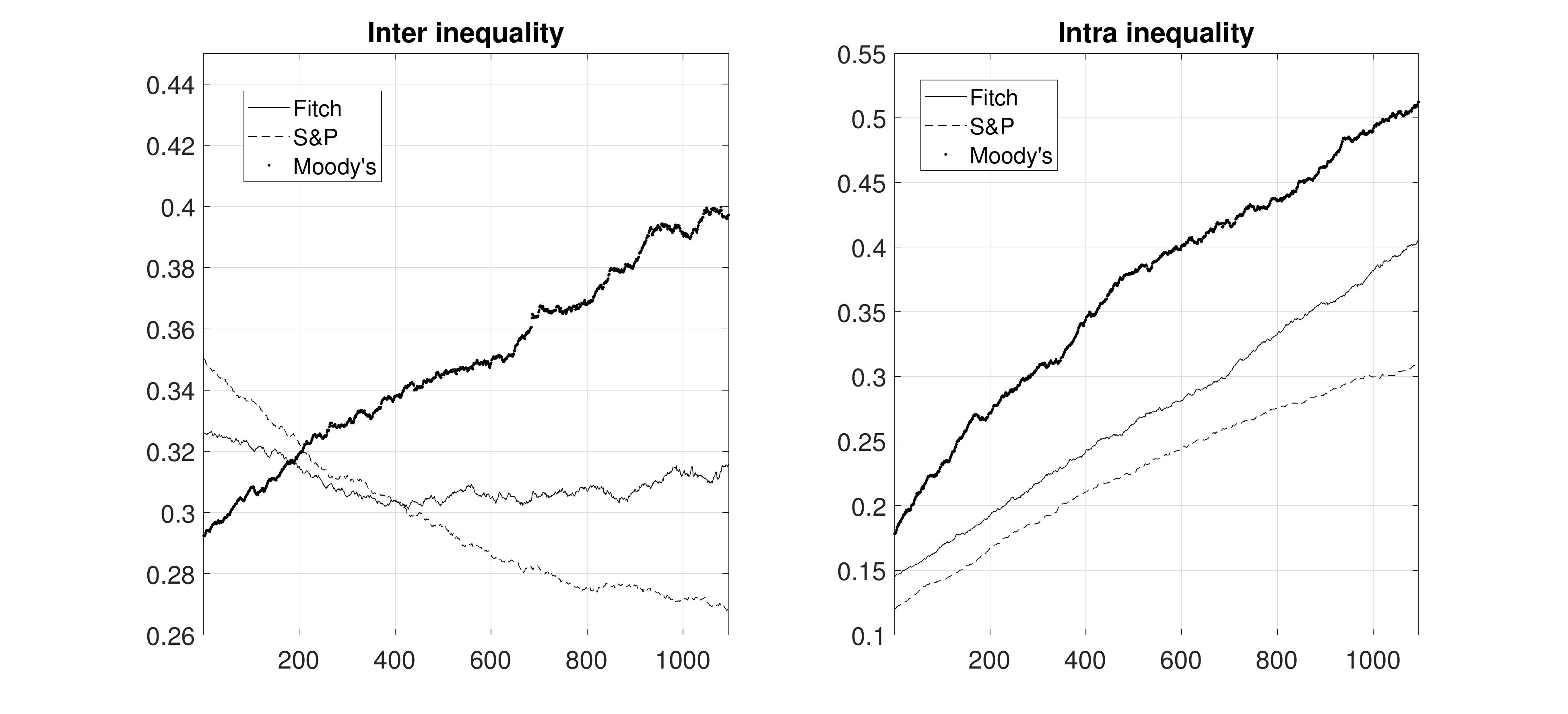}
\caption{Inter  and Intra - class inequality measure for all agencies}\label{inter}
\end{figure}
Figure (\ref{inter}) shows the dynamic inequality measure decomposed into inter - inequality (left panel) and intra - inequality (right panel). The inter - inequality assess the dispersion of the financial risk between rating classes. The intra - inequality is the inequality within rating classes, computed on the conditional share of credit spread knowing the allocation of each country in a given rating class. According to Figure (\ref{inter}) the intra-inequality measure is increasing for all agencies, with Moody's being the one with the largest values, followed by Fitch and S\&P. On the other side, the inter-inequality measure shows different trends. As a matter of fact, for Moody's it is increasing, explaining the strong growth of the financial risk inequality for this agency. On the contrary, for the other agencies the inequality computed between rating classes is decreasing. In particular, for Fitch it falls very slowly until the middle of the simulated period. After it remains almost stable. Thus, the rise of the financial inequality is explained by the evolution of the intra-class inequality. While, in the case of S\&P, the inter-class  inequality decreases over time of about the 23\%. As the intra-class inequality increases, the resulting inequality of the financial risk goes up (as shown in Figure \ref{edttc}). We observe that higher values of the inter-inequality measure imply a strong explanatory power of the rating classes. On the contrary, lower values of the inter-inequality measure denote a relative small influence of the credit rating while explaining the inequality evolution. 
  
\subsection{The covariance between countries' total credit spread and the correlation structure}
The last result concerns the covariance between countries' total credit spread as described in Section \ref{co:variance}. To better interpret the results we compute the coefficient of correlation. We will denote this coefficient as $\rho_{j_\alpha, j_\beta}^{\alpha,\beta}(t)$.  Figures (\ref{corr1}-\ref{corr2}) compare the correlation coefficients estimated on the cumulative observed spread, i.e. $\rho^{\alpha,\beta}$ on the left panel, with $\rho_{j_\alpha, j_\beta}^{\alpha,\beta}(t)$ computed using S\&P data on the right panel. In particular, the value of  $\rho_{j_\alpha, j_\beta}^{\alpha,\beta}(t)$ is taken in the middle of the simulated period, i.e. $t=547$. So that $\rho^{\alpha,\beta}$ is estimated on the total spread paid by all countries computed on the observed data and over the same time-length. 
\begin{figure}[!ht]
\begin{minipage}{0.49\textwidth}
\includegraphics[width=1.1\textwidth]{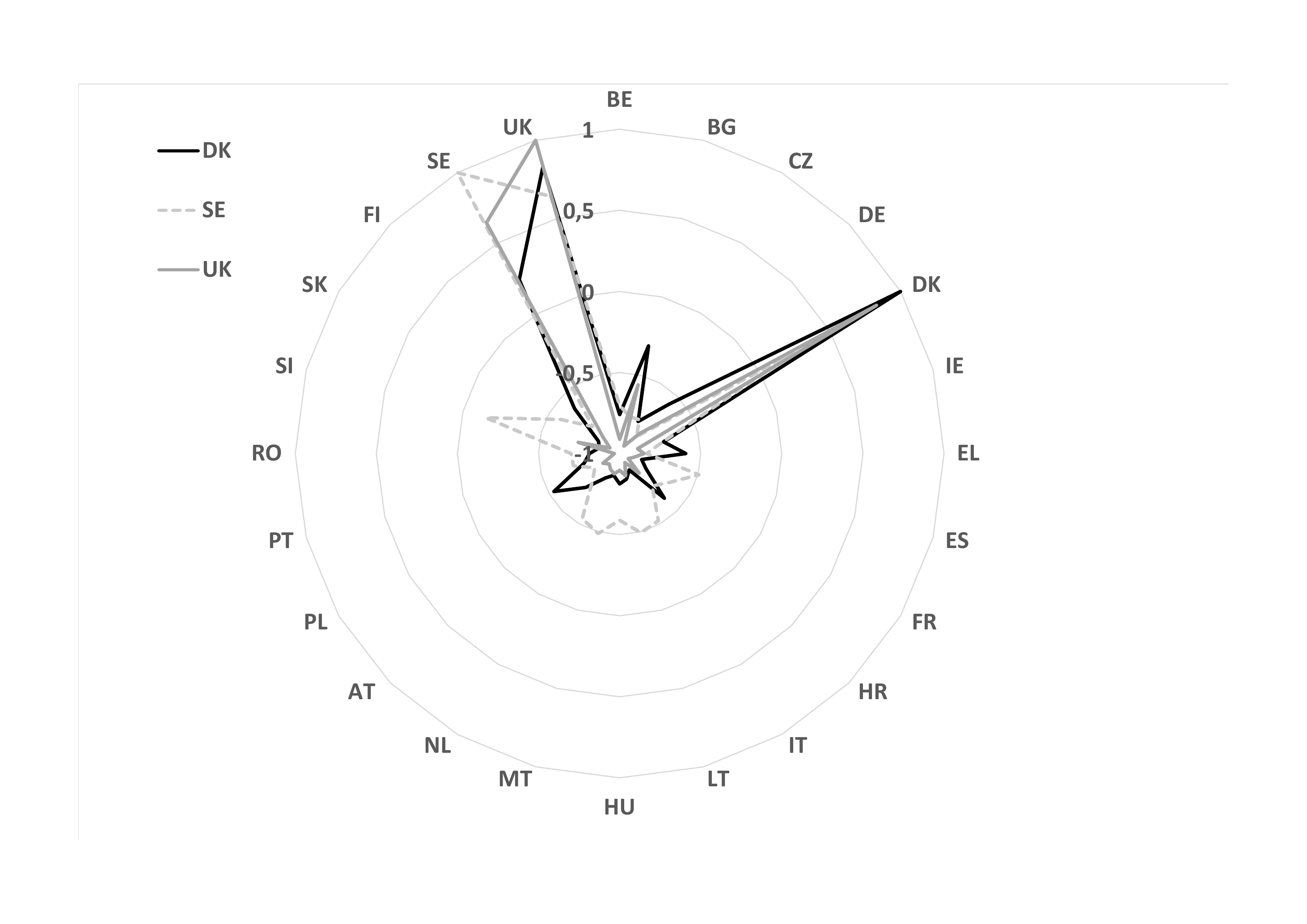}
\end{minipage}
\hspace{0.2cm}
\begin{minipage}{0.49\textwidth}
\includegraphics[width=1.1\textwidth]{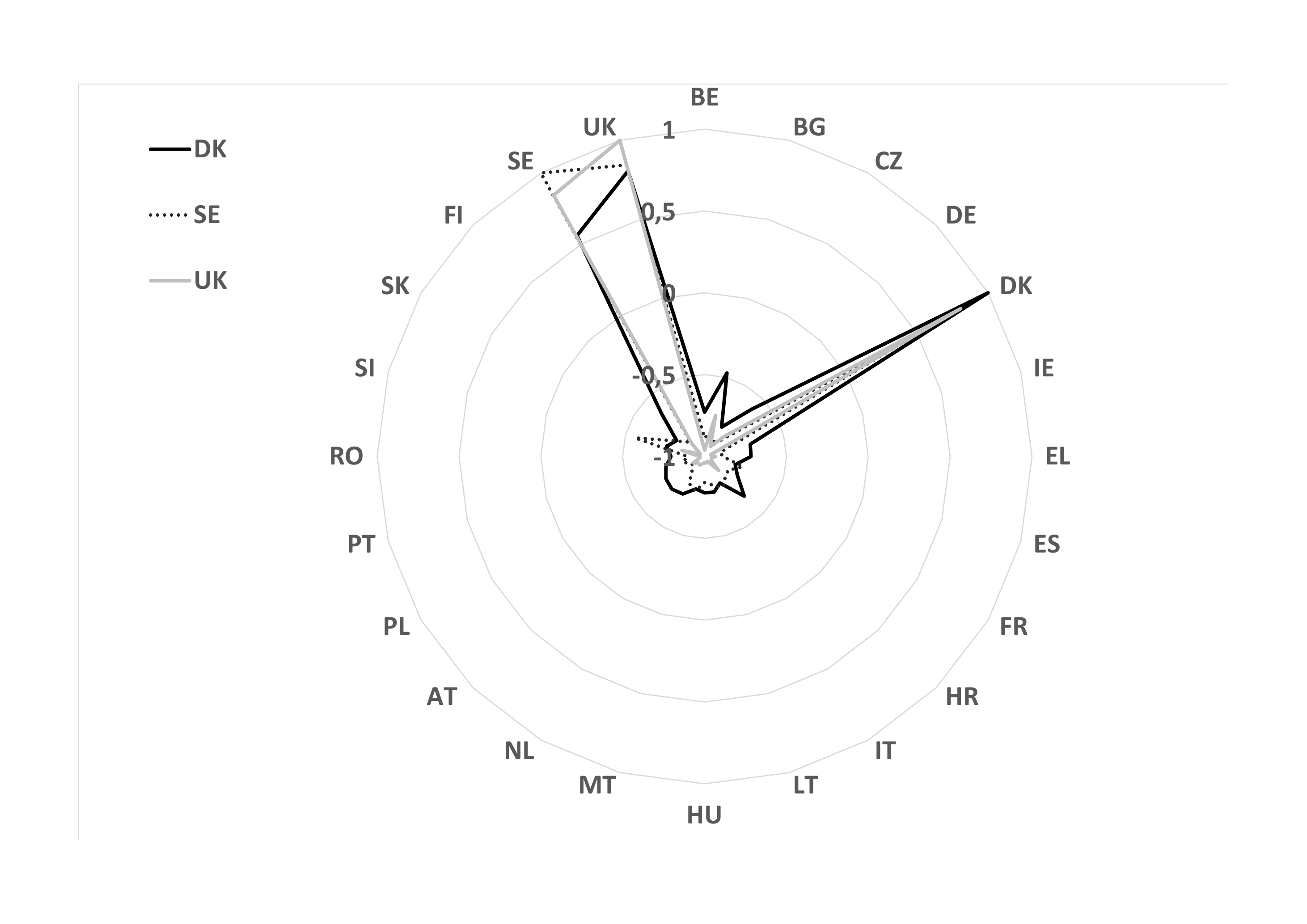}
\end{minipage}
\caption{Correlation coefficient computed on observed and simulated data: Denmamrk, Sweden and United Kingdom}\label{corr1}
\begin{minipage}{0.49\textwidth}
\includegraphics[width=1.1\textwidth]{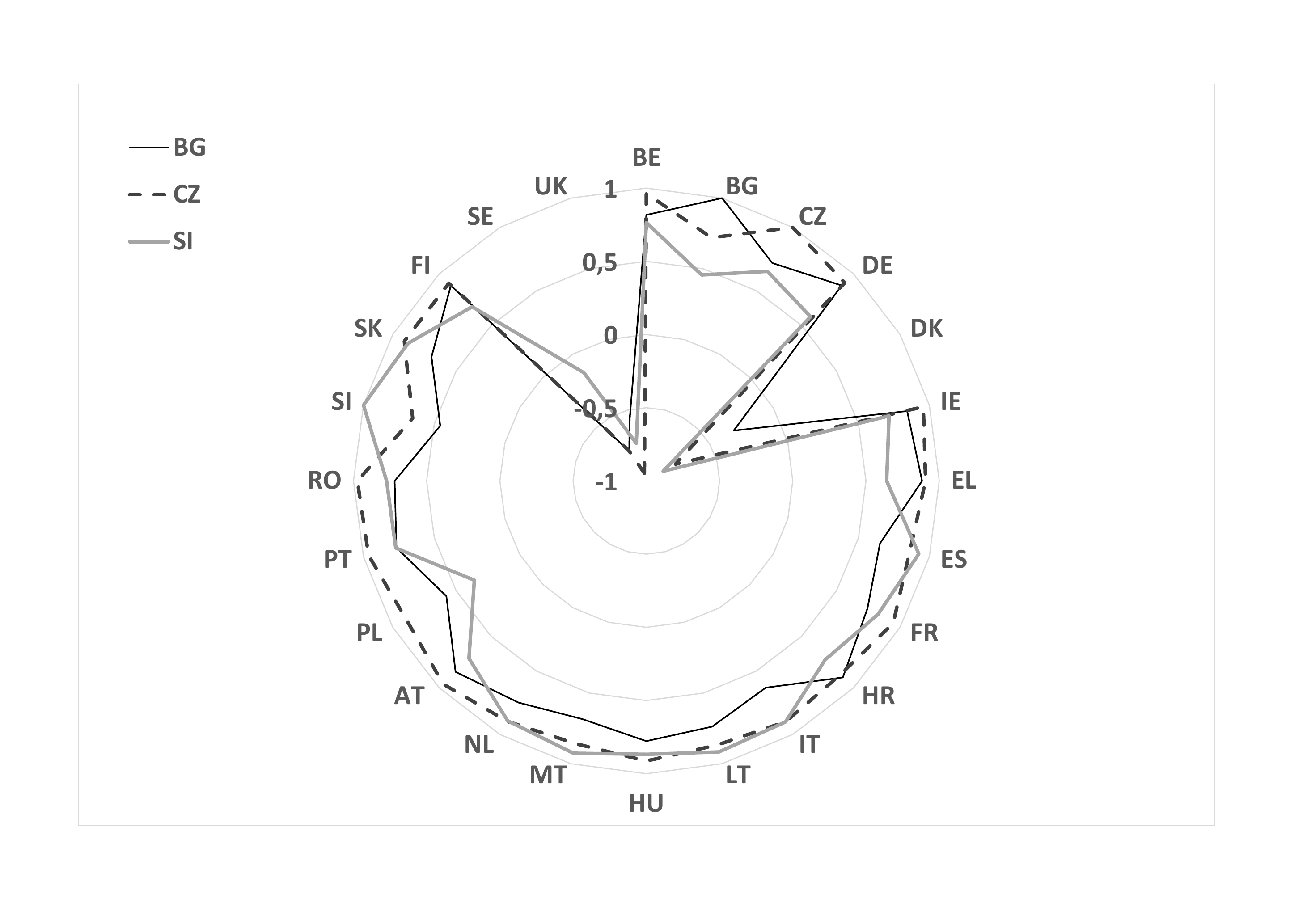}
\end{minipage}
\hspace{0.2cm}
\begin{minipage}{0.49\textwidth}
\includegraphics[width=1.1\textwidth]{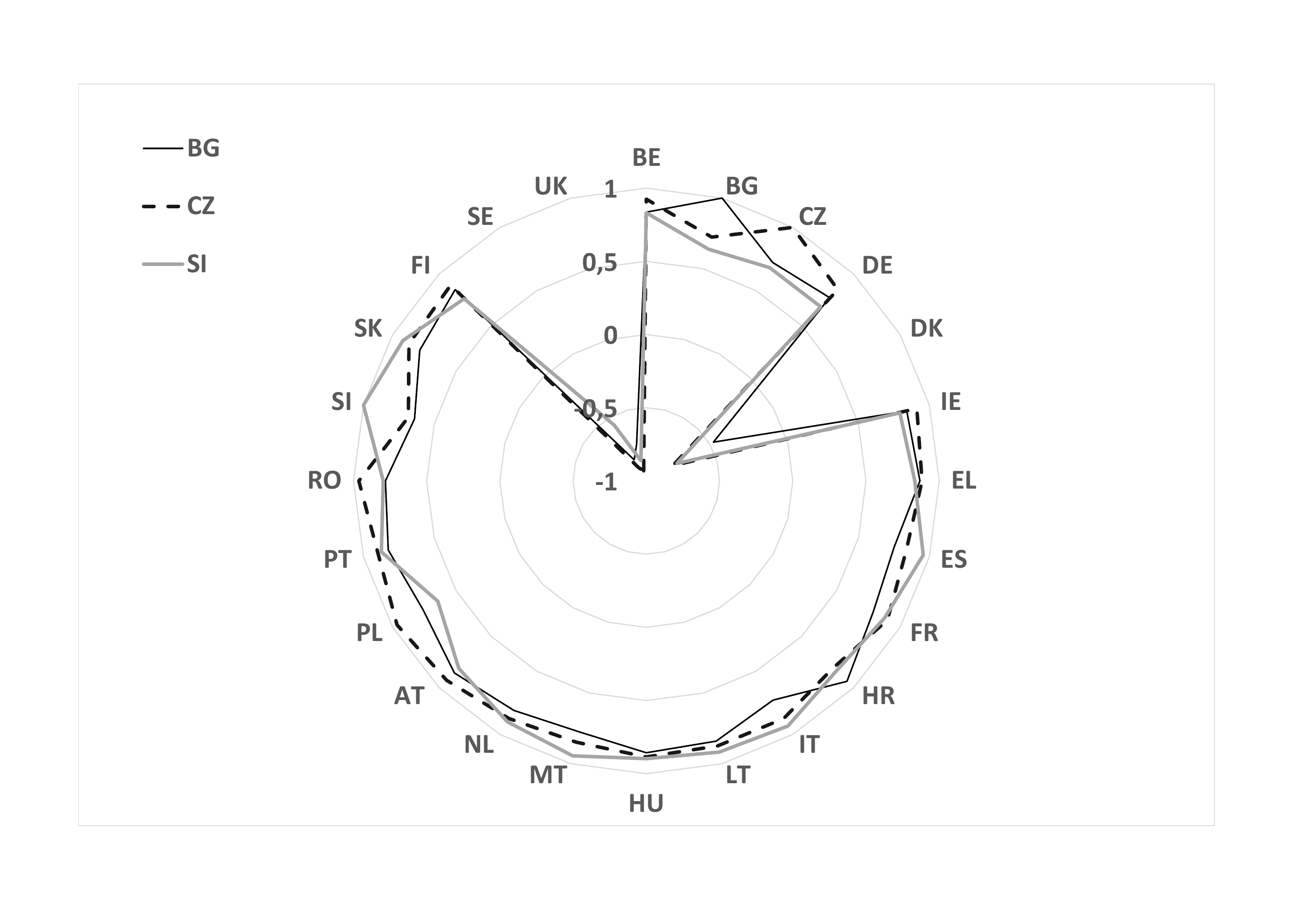}
\end{minipage}
\caption{Correlation coefficient computed on observed and simulated data: Bulgaria, Czech Republic and Slovenia}\label{corr2}
\end{figure}
Figure (\ref{corr1}) illustrates the coefficients between Denmark, Sweden and United Kingdom and all other countries. As highlighted above, this group is negatively correlated with the rest of EU, except with its components. The structure of  $\rho_{j_\alpha, j_\beta}^{\alpha,\beta}(t)$ is close to that of  $\rho^{\alpha,\beta}$, and the value of the correlation coefficient are almost similar. As a matter of example, $\rho_{j_\alpha, j_\beta}^{\alpha,\beta}(t)=0.81$ and $\rho^{\alpha,\beta}=0.83$ between Denmark and United Kingdom or between United Kingdom and Italy $\rho_{j_\alpha, j_\beta}^{\alpha,\beta}(t)=-0.96$ and $\rho^{\alpha,\beta}=-0.93$.  The same is true  for the countries showed in Figure (\ref{corr2}): Bulgaria, Czech Republic and Slovenia. For instance, the correlation between Belgium and Bulgaria computed from the model is $0.84$, while that computed on the observed data is $0.82$. 
However, although the values are very close, there exist some exceptions. The difference between  $\rho_{j_\alpha, j_\beta}^{\alpha,\beta}(t)$  and $\rho^{\alpha,\beta}$ between Bulgaria and Denmark is $33.83\%$, while between Italy and Sweden is $33.35\%$.

This results denote that the correlation structure is well reproduced by the model. Moreover, there is a very strong positive correlation between the most of countries  while a strong negative correlation would characterize the dependence structure between Denmark, Sweden and United Kingdom with respect to all other countries.

\section{Concluding Remarks}
In this paper, we propose a copula based Markov reward approach to investigate the financial risk in European Union. 
The novelty of this approach consists on the use of the piecewise homogeneous Markov chain to describe rating dynamics 
 and on  the inclusion of a stochastic process describing the spread evolution and its dependence among countries.
In particular the financial risk is evaluated by focusing  on its distribution by means of generalized measure of inequality and on its total amount. The latter is analysed by computing the total financial risk in a recursive way including every possible evolution of countries' rating assignment. The methodology proposed has been applied to real data concerning sovereign credit rating assigned by Moody's, Fitch and S\&P and sovereign credit spreads.  
Obtained results suggest that the financial risk will be less equi-distributed over the next future. The dynamic inequality shows an increasing trend, mostly driven by the intra-inequality measure, at a different rate for different rating agencies.  Differences are highlighted also on the evolution of the total risk, whose speed of growth is increasing for S\&P and Fitch, but decreasing for Moody,s. 
Furthermore, the investigation of the dependence structure reveals a strong correlation between countries both on the historical data and on those simulated by our model.\\ 
The decomposition of the dynamic inequality into inter and intra components permits to assess the explanatory power of the rating variables and how it evolves over time. 
Investigating this topic, could be helpful for policies aiming to control the financial risk within a given group of countries. Furthermore, the methodology is a potential tool to be applied to other financial problems such as portfolio loan management having the purpose to reduce the credit risk exposure.  
Finally, we think that possible extensions consist in the introduction of other economic and financial variables to give a better explanation  about the financial risk, in terms of its inequality and its quantification, by adding other sources of dependence.

% BibTeX users please use one of
\bibliography{mybib}

\begin{thebibliography}{10}

\bibitem{cds}
Adrian Alter and Yves~S. Sch�ler.
\newblock Credit spread interdependencies of european states and banks during
  the financial crisis.
\newblock {\em Journal of Banking \& Finance}, 36(12):3444 -- 3468, 2012.

\bibitem{bangia2002}
Anil Bangia, Francis~X Diebold, Andr{\'e} Kronimus, Christian Schagen, and Til
  Schuermann.
\newblock Ratings migration and the business cycle, with application to credit
  portfolio stress testing.
\newblock {\em Journal of banking \& finance}, 26(2-3):445--474, 2002.

\bibitem{brexit}
Usman Bashir, Gilney~Figueira Zebende, Yugang Yu, Muntazir Hussain, Ahmed Ali,
  and Ghulam Abbas.
\newblock Differential market reactions to pre and post brexit referendum.
\newblock {\em Physica A: Statistical Mechanics and its Applications}, 515:151
  -- 158, 2019.

\bibitem{belkin98}
Barry Belkin, Stephan Suchower, and L~Forest~Jr.
\newblock A one-parameter representation of credit risk and transition
  matrices.
\newblock {\em CreditMetrics monitor}, 1(3):46--56, 1998.

\bibitem{cover}
Thomas~M Cover and Joy~A Thomas.
\newblock {\em Elements of information theory}.
\newblock John Wiley \& Sons, 2012.

\bibitem{d2017semi}
Guglielmo d'Amico, Giuseppe Di~Biase, Jacques Janssen, and Raimondo Manca.
\newblock {\em Semi-Markov Migration Models for Credit Risk}.
\newblock John Wiley \& Sons, 2017.

\bibitem{d2012}
Guglielmo D'Amico, Giuseppe Di~Biase, and Raimondo Manca.
\newblock Income inequality dynamic measurement of markov models: Application
  to some european countries.
\newblock {\em Economic Modelling}, 29(5):1598--1602, 2012.

\bibitem{d2014}
Guglielmo D'Amico, Giuseppe Di~Biase, and Raimondo Manca.
\newblock decomposition of the population dynamic theil's entropy and its
  application to four european countries.
\newblock {\em Hitotsubashi Journal of Economics}, pages 229--239, 2014.

\bibitem{d2006homogeneous}
Guglielmo D'Amico, Jacques Janssen, and Raimondo Manca.
\newblock Homogeneous semi-markov reliability models for credit risk
  management.
\newblock {\em Decisions in Economics and Finance}, 28(2):79--93, 2006.

\bibitem{semi:spread}
Guglielmo D'Amico, Jacques Janssen, and Raimondo Manca.
\newblock A non-homogeneous semi-markov reward model for the credit spread
  computation.
\newblock {\em International Journal of Theoretical and Applied Finance},
  14(02):221--238, 2011.

\bibitem{bivariate}
Guglielmo D�Amico, Raimondo Manca, and Giovanni Salvi.
\newblock Bivariate semi-markov reward chain and credit spreads.
\newblock {\em IMA Journal of Management Mathematics}, 27(4):529--556, 2015.

\bibitem{copula4}
Guglielmo D�Amico and Filippo Petroni.
\newblock Copula based multivariate semi-markov models with applications in
  high-frequency finance.
\newblock {\em European Journal of Operational Research}, 267(2):765--777,
  2018.

\bibitem{dr}
Guglielmo D'Amico and Philippe Regnault.
\newblock Dynamic measurement of poverty: Modeling and estimation.
\newblock {\em Sankhya B}, Feb 2018.

\bibitem{continuous}
Guglielmo D'Amico, Philippe Regnault, Stefania Scocchera, and Loriano Storchi.
\newblock A continuous-time inequality measure applied to financial risk: The
  case of the european union.
\newblock {\em International Journal of Financial Studies}, 6(3):62, 2018.

\bibitem{discrete}
Guglielmo D'Amico, Stefania Scocchera, and Loriano Storchi.
\newblock Financial risk distribution in european union.
\newblock {\em Physica A: Statistical Mechanics and its Applications},
  505:252--267, 2018.

\bibitem{copula2}
Fabrizio Durante and Piotr Jaworski.
\newblock Spatial contagion between financial markets: a copula-based approach.
\newblock {\em Applied Stochastic Models in Business and Industry},
  26(5):551--564, 2010.

\bibitem{copula1}
Paul Embrechts, Alexander McNeil, and Daniel Straumann.
\newblock Correlation and dependence in risk management: properties and
  pitfalls.
\newblock {\em Risk management: value at risk and beyond}, 1:176--223, 2002.

\bibitem{grexit}
Morgan Escalera and Wayne Tarrant.
\newblock Sovereign adaptive risk modeling and implications for the eurozone
  grexit case.
\newblock {\em International Journal of Financial Studies}, 6(2):48, 2018.

\bibitem{fuertes_sovereign}
Ana-Maria Fuertes and Elena Kalotychou.
\newblock On sovereign credit migration: A study of alternative estimators and
  rating dynamics.
\newblock {\em Computational Statistics \& Data Analysis}, 51(7):3448--3469,
  2007.

\bibitem{hu2002}
Yen-Ting Hu, Rudiger Kiesel, and William Perraudin.
\newblock The estimation of transition matrices for sovereign credit ratings.
\newblock {\em Journal of Banking \& Finance}, 26(7):1383--1406, 2002.

\bibitem{huang}
Jing-Zhi Huang and Ming Huang.
\newblock How much of the corporate-treasury yield spread is due to credit
  risk?
\newblock {\em The Review of Asset Pricing Studies}, 2(2):153--202, 2012.

\bibitem{JS}
Yusuf Jafry and Til Schuermann.
\newblock Measurement, estimation and comparison of credit migration matrices.
\newblock {\em Journal of Banking \& Finance}, 28(11):2603--2639, 2004.

\bibitem{eu}
Jo{\~a}o~Tovar Jalles.
\newblock What determines the share of non-resident public debt ownership?
  evidence from euro area countries.
\newblock {\em Annals of Finance}, 14(3):379--414, Aug 2018.

\bibitem{nickell00}
Pamela Nickell, William Perraudin, and Simone Varotto.
\newblock Stability of rating transitions.
\newblock {\em Journal of Banking \& Finance}, 24(1-2):203--227, 2000.

\bibitem{semi:application}
Aleka~A Papadopoulou and Panagiotis~CG Vassiliou.
\newblock Continuous time non homogeneous semi-markov systems.
\newblock In {\em Semi-Markov Models and Applications}, pages 241--251.
  Springer, 1999.

\bibitem{sovereign2015a}
Ahmet Perilioglu and Sukriye Tuysuz.
\newblock Conditional sovereign transition probability matrices.
\newblock {\em Procedia Economics and Finance}, 30:643--655, 2015.

\bibitem{detect}
Alan~M Polansky.
\newblock Detecting change-points in markov chains.
\newblock {\em Computational Statistics \& Data Analysis}, 51(12):6013--6026,
  2007.

\bibitem{copula3}
Philipp~J Sch{\"o}nbucher and Dirk Schubert.
\newblock Copula-dependent default risk in intensity models.
\newblock In {\em Working paper, Department of Statistics, Bonn University}.
  Citeseer, 2001.

\bibitem{shannon}
Claude~Elwood Shannon.
\newblock A mathematical theory of communication.
\newblock {\em Bell system technical journal}, 27(3):379--423, 1948.

\bibitem{theil}
Henri Theil.
\newblock Economics and information theory.
\newblock Technical report, 1967.

\bibitem{wei_multi}
Jason~Z Wei.
\newblock A multi-factor, credit migration model for sovereign and corporate
  debts.
\newblock {\em Journal of International Money and Finance}, 22(5):709--735,
  2003.

\bibitem{online:cp}
Jin-Guo Xian, Dong Han, and Jian-Qi Yu.
\newblock Online change detection of markov chains with unknown post-change
  transition probabilities.
\newblock {\em Communications in Statistics-Theory and Methods},
  45(3):597--611, 2016.

\bibitem{cp2012}
Haipeng Xing, Ning Sun, and Ying Chen.
\newblock Credit rating dynamics in the presence of unknown structural breaks.
\newblock {\em Journal of Banking \& Finance}, 36(1):78--89, 2012.

\end{thebibliography}

\bibliographystyle{plain}    
\end{document}